\pgfplotsset{compat=1.5}
\newtheorem{observation}[theorem]{\bf Observation}
\newenvironment{proofof}[1]{\begin{trivlist} \item {\bf Proof
#1:~~}}
  {\qed\end{trivlist}}
\newcommand{\namedref}[2]{\hyperref[#2]{#1~\ref*{#2}}}
\newcommand{\thmlab}[1]{\label{thm:#1}}
\newcommand{\thmref}[1]{\namedref{Theorem}{thm:#1}}
\newcommand{\lemlab}[1]{\label{lem:#1}}
\newcommand{\lemref}[1]{\namedref{Lemma}{lem:#1}}
\newcommand{\seclab}[1]{\label{sec:#1}}
\newcommand{\secref}[1]{\namedref{Section}{sec:#1}}
\newcommand{\applab}[1]{\label{app:#1}}
\newcommand{\appref}[1]{\namedref{Appendix}{app:#1}}
\newcommand{\figlab}[1]{\label{fig:#1}}
\newcommand{\figref}[1]{\namedref{Figure}{fig:#1}}
\newcommand{\alglab}[1]{\label{alg:#1}}
\newcommand{\algoref}[1]{\namedref{Algorithm}{alg:#1}}
\newcommand{\obslab}[1]{\label{obs:#1}}
\newcommand{\obsref}[1]{\namedref{Observation}{obs:#1}}
\newcommand{\exlab}[1]{\label{ex:#1}}
\newcommand{\exref}[1]{\namedref{Example}{ex:#1}}
\newcommand{\COMMENTED}[1]{{}}
\renewcommand{\O}[1]{\ensuremath{\mathcal{O}\left(#1\right)}}
\newcommand{\eps}{\epsilon}
\newcommand{\mdef}[1]{{\ensuremath{#1}}\xspace}  % Math Def which can also be used in normal text.
\newcommand{\myfunc}[1]{\mdef{\mathsf{#1}}}      % Functions denoted with text (e.g. size()) should use mathsf.
\DeclareMathOperator*{\polylog}{polylog}
\newcommand{\superscript}[1]{\ensuremath{^{\mbox{\tiny{\textit{#1}}}}}\xspace}
\def \th {\superscript{th}}     % 'The i-th entry it a list...' --> i\th
\def \etal{{\it et~al.}}
\def \polylog  {\mdef{\myfunc{polylog}}}             % Polylogarithm function
\renewcommand{\gcd}[1]{\mdef{\mathsf{gcd}\left(#1\right)}}
\newcommand{\HAM}[1]{\mdef{\Delta\left(#1\right)}}
\newcommand{\ignore}[1]{}
\newif\ifnotes\notestrue %set this to true if notes are visible and to false (next line) if they should be hidden
\newcommand{\enote}[1]{}
\newcommand{\snote}[1]{}
\renewcommand*{\@fnsymbol}[1]{\textcolor{red}{\ensuremath{\ifcase#1\or *\or \dagger\or \ddagger\or
 \mathsection\or \triangledown\or \mathparagraph\or \|\or **\or \dagger\dagger
   \or \ddagger\ddagger \else\@ctrerr\fi}}}
\providecommand{\email}[1]{\href{mailto:#1}{\nolinkurl{#1}\xspace}}
\newcommand{\figtchj}{
\begin{figure*}[!htb]
\centering
\begin{tikzpicture}[scale=0.6]

\draw (-9,0+2) -- (-1,0+2);
\draw (1,0+2) -- (9,0+2);
\foreach \x in {-9,1}{% Two indices running over each
	\draw (\x+0.1,0.8+2) -- (\x,0.8+2) -- (\x,-0.8+2) -- (\x+0.1,-0.8+2);
}
\foreach \x in {-1,9}{% Two indices running over each
	\draw (\x-0.1,0.8+2) -- (\x,0.8+2) -- (\x,-0.8+2) -- (\x-0.1,-0.8+2);
}

\draw (-9,0) -- (-1,0);
\draw (1,0) -- (9,0);
\foreach \x in {-9,1}{% Two indices running over each
	\draw (\x+0.1,0.8) -- (\x,0.8) -- (\x,-0.8) -- (\x+0.1,-0.8);
}
\foreach \x in {-1,9}{% Two indices running over each
	\draw (\x-0.1,0.8) -- (\x,0.8) -- (\x,-0.8) -- (\x-0.1,-0.8);
}

\node at (-10,2){$H_1$};
\node at (0,2){$H_2$};
\node at (-10,0){$H_3$};
\node at (0,0){$H_4$};

\node[draw,circle,inner sep=2pt,fill] at (-7,2) {};
\node[draw,circle,inner sep=2pt,fill] at (-5,2) {};
\node[draw,forbidden sign] at (-3,2) {};
\draw[decorate,decoration={brace,mirror}](-6.9,1.6) -- (-5.1,1.6);
\node at (-6,1.2){$\pi_1$};

\node[draw,circle,inner sep=2pt,fill] at (-8,0) {};
\node[draw,forbidden sign] at (-7,0) {};
\node[draw,forbidden sign] at (-6,0) {};
\node[draw,circle,inner sep=2pt,fill] at (-5,0) {};
\node[draw,circle,inner sep=2pt,fill] at (-4,0) {};
\node[draw,forbidden sign] at (-3,0) {};
\node[draw,forbidden sign] at (-2,0) {};
\draw[decorate,decoration={brace,mirror}](-7.9,-0.4) -- (-7.1,-0.4);
\node at (-7.5,-0.8){$\pi_3$};

\node[draw,circle,inner sep=2pt,fill] at (7,0) {};

\end{tikzpicture}
\caption{The dots represent candidate wildcard-periods. For any interval that has more than two dots, it follows that all dots are equally spaced after the first. The black dots represent $\mathcal{T}$ while white dots are artificially inserted to form $\mathcal{T}$, dots that follow an arithmetic sequence.}\figlab{fig:tchj}
\end{figure*}
}
\title{Periodicity in Data Streams with Wildcards
}
\author{
Funda Erg{\"{u}}n\inst{1}
\and
Elena Grigorescu\inst{2}
\and
Erfan Sadeqi Azer\inst{1}
\and
Samson Zhou\inst{2}
}
\institute{School of Informatics and Computing, Indiana University, Bloomington, IN.
\newline Email: {\tt fergun@indiana.edu, esadeqia@indiana.edu}.
\and
Department of Computer Science, Purdue University, West Lafayette, IN. 
\newline Email: {\tt elena-g@purdue.edu, samsonzhou@gmail.com}.
}
\begin{document}
\maketitle
\begin{abstract}
We investigate the problem of detecting periodic trends within a string $S$ of length $n$, arriving in the streaming model, containing at most $k$ wildcard characters, where $k=o(n)$. 
A wildcard character is a special character that can be assigned any other character. 
We say $S$ has wildcard-period $p$ if there exists an assignment to each of the wildcard characters so that in the resulting stream the length $n-p$ prefix equals the length $n-p$ suffix. 
We present a two-pass streaming algorithm that computes wildcard-periods of $S$ using $\O{k^3\,\polylog\,n}$ bits of space, while we also show that this problem cannot be solved in sublinear space in one pass.
We then give a one-pass randomized streaming algorithm that computes all wildcard-periods $p$ of $S$ with $p<\frac{n}{2}$ and no wildcard characters appearing in the last $p$ symbols of $S$, using $\O{k^3\log^9 n}$ space.
\end{abstract}

\section{Introduction}
We study the problem of detecting repetitive structure in a data stream $S$ containing a small number of \emph{wildcard characters}. 
Given an alphabet $\Sigma$ and a special \emph{wildcard} character `$\bot$'\footnote{Although wildcard characters are usually denoted with `$?$', we use $\bot$ to differentiate from compilation errors - the \LaTeX$\,$ equivalent of wildcard characters}, 
let $S\in (\Sigma\cup\,\{\bot\})^n$ be a stream that contains at most $k$ wildcards.
We can assign a value from $\Sigma$ to each wildcard character in $S$ resulting in many possible values of $S$. 
Then we informally say $S$ has \emph{wildcard-period} $p$ if there exists an assignment to each of the wildcard characters in $S$ so that the resulting string consists of the repetition of a block of $p$ characters.  

\begin{example}
\exlab{ex:wildcard:period}
The string $S=abcab\bot a\bot c\bot bc$ has wildcard-period $3$, since assigning `c' to the first wildcard character, `b' to the second wildcard character, and `a' to the third results in the string `abcabcabcabc', which consists of repetitions of the substring `abc' of length $3$.
\end{example}

The identification of repetitive structure in data has applications to bioinformatics, natural language processing, and time series data mining. 
Specifically, finding the smallest period of a string is necessary preprocessing for many algorithms, such as the classic Knuth-Morriss-Pratt \cite{KnuthMP77} algorithm in pattern matching, or the basic local alignment search tool (BLAST) \cite{Altschul90} in computational biology. 

We consider our problem in the \emph{streaming model}, where we process the input in sequential order and sublinear space. 
However in practice, some of the data may be erased or corrupted beyond repair, resulting in symbols that we cannot read, `$\bot$'. 
As a consequence, we attempt to perform pattern matching with optimistic assignments to these values. 
This motivation has resulted in a number of literature on string algorithms with wildcard characters \cite{MuthukrishnanR95, Indyk98a, ColeH02, Kalai02a, CliffordC07, HermelinR14, LewensteinNV14, GolanKP16}.

One possible approach to our problem is to generalize the exact periodicity problem, for which \cite{ErgunJS10} give a two-pass streaming algorithm for finding the smallest \emph{exact} period of a string of length $n$ that uses $\O{\log^2n}$-space and $\O{\log n}$ time per arriving symbol. 
Their results can be easily generalized to an algorithm for finding the wildcard-period of strings  using $\O{\log^2n}$-space, but at a cost of $\O{|\Sigma|^k}$ post-processing time, which is often undesirable. 
More recently, \cite{ErgunGSZ17} study the problem of $k$-periodicity, where a string is permitted to have up to $k$ permanent changes. 
The authors give a two-pass streaming algorithm that uses $\O{k^4\log^9 n}$ bits of space and runs in $\O{k^2\,\polylog\,n}$ amortized time per arriving symbol. This algorithm can be modified to recover the wildcard-period. 
We show how to do this more efficiently in \thmref{thm:twopass}.

\subsection{Our Contributions}
The challenge of determining periodicity in the presence of wildcard characters can first be approached by working toward an understanding of specific structural properties of strings with wildcard characters. 
We show in \lemref{lem:num:prints} that the number of possible assignments to the wildcard characters over all periods is ``small''. 
This allows us to compress our data into sublinear space. 
In this paper, given a string $S$ with at most $k$ wildcard characters, we show:
\begin{enumerate}
\item a two-pass randomized streaming algorithm that computes all wildcard-periods of $S$ using $\O{k^3\,\polylog\,n}$ space, {\em regardless of period length}, running in $\O{k^2\,\polylog\,n}$ amortized time per arriving symbol,
\item a one-pass randomized streaming algorithm that computes all wildcard-periods $p$ of $S$ with $p<\frac{n}{2}$ and no wildcard characters appearing in the last $p$ symbols of $S$, using $\O{k^3\,\polylog\,n}$ space, running in $\O{k^2\,\polylog\,n}$ amortized time per arriving symbol (see \appref{app:onepass}),
\item a lower bound that any one-pass streaming algorithm that computes all wildcard-periods of $S$ requires $\Omega(n)$ space even when randomization is allowed,
\item a lower bound that, for $k=o(\sqrt{n})$ with $k>2$, any one-pass randomized streaming algorithm that computes all wildcard-periods of $S$ with probability at least $1-\frac{1}{n}$ requires $\Omega(k \log n)$ space, even under the promise that the wildcard-periods are at most $n/2$.
\end{enumerate}
We remark that our algorithm can be easily modified to return the smallest, largest, or any desired wildcard-period of $S$.
Finally, we note in \appref{app:distance} several results in the related problem of determining distance to $p$-periodicity. 
We give an overview of our techniques in \secref{sec:overview}.

\subsection{Related Work}
The study of periodicity in data streams was initiated in \cite{ErgunJS10}, in which the authors give an algorithm that detlects the period of a string, using $\polylog\,n$ bits of space. 
Independently, \cite{breslauer2011real} gives a similar result with improved running time. 
Also, \cite{ElfekyAE06} studies mining periodic patterns in streams, and \cite{CrouchM11} studies periodicity via linear sketches, \cite{KoudasIM00} studies periodicity in time-series databases and online data. 
\cite{ErgunMS10} and \cite{LachishN11} study the problem of distinguishing periodic strings from aperiodic ones in the property testing model of sublinear-time computation. 
Furthermore, \cite{amir2010approximate} studies approximate periodicity in the RAM model under the Hamming and swap distance metrics. 
 
The pattern matching literature is a vast area (see \cite{ApostolicoG:1997} for a survey) with many variants.
In the data stream model, \cite{PoratP09} and \cite{CliffordFPSS16} study exact and approximate variants in offline and online settings. We use the sketches from \cite{CliffordFPSS16} though there are some other works \cite{andoni2013homomorphic,clifford2009coding,radoszewski2016streaming,porat2007improved} with different sketches for strings. 
\cite{clifford2013space} also show several lower bounds for online pattern matching problem. 

Strings with wildcard characters have been extensively studied in the offline model, usually called ``partial words''. 
Blanchet-Sadri \cite{Blanchet08} presents a number of combinatorial properties on partial words, including a large section devoted to periodicity. 
Notably, \cite{Blanchet12} gives algorithms for determining the periodicity for partial words. 
Manea \etal\,\cite{ManeaMT14} improves these results, presenting efficient time offline algorithms for determining periodicity on partial words, minimizing either total time or update time per symbol. 

Golan \etal\,\cite{GolanKP16} study the pattern matching problem with a small number of wildcards in the streaming model. 
Prior to this work, several works had studied other aspects of pattern matching under wildcards (See \cite{ColeH02},\cite{CliffordC07},\cite{HermelinR14},and \cite{LewensteinNV14}).
 
Many ideas used in these sublinear algorithms stem from related work in the classical offline model. 
The well-known KMP algorithm \cite{KnuthMP77} initially used periodic structures to search for patterns within a text. 
Galil \etal \cite{galil1983time} later improved the space performance of this pattern matching algorithm. 
Recently, \cite{gawrychowski2013optimal} also used the properties of periodic strings for pattern matching when the strings are compressed. 
These interesting properties have allowed several algorithms to satisfy some non-trivial requirements of respective models (see \cite{GolanKP16}, \cite{CliffordFPSS15} for example). 

\subsection{Preliminaries}
Given an input stream $S[1,\ldots, n]$ of length $|S|=n$ over some alphabet $\Sigma$, we denote the $i\th$ character of $S$ by $S[i]$, and the substring between locations $i$ and $j$ (inclusive) $S[i,j]$.  
We say that two strings $S,T\in\Sigma^n$ have a {\it mismatch} at index $i$ if $S[i]\neq T[i]$. 
Then the Hamming distance is the number of such mismatches, denoted $\HAM{S,T}=\Big|\{ i\mid S[i]\ne T[i]\}\Big|$.
We denote the concatenation of $S$ and $T$ by $S\circ T$. 
We denote the greatest common divisor of two integers $x$ and $y$ by $\gcd{x,y}$.

Multiple standard and equivalent definitions of periodicity are often used interchangeably.
We say $S$ has period $p$ if $S=B^{\ell}B'$ where $B$ is a block of length $p$ that appears $\ell\geq 1$ times in a row, and $B'$ is a prefix of $B$. 
For instance, $abcdabcdab$ has period 4 where $B = abcd$, and $B'=ab$. 
Equivalently, $S[x]=S[x+p]$ for all $1\le x\le n-p$. 
Similarly, the following definition is also used for periodicity.

\begin{definition}
We say string $S$ has period $p$ if the length $n-p$ prefix of $S$ is identical to its length $n-p$ suffix, $S[1,n-p]=S[p+1,n]$.
\end{definition}

More generally, we say $S$ has $k$-period $p$ (i.e., $S$ has period $p$ with $k$ mismatches) if $S[x]=S[x+p]$ for all but at most $k$ (valid) indices $x$. 
Equivalently, the following definition is also used for $k$-periodicity.

\begin{definition}
We say string $S$ has $k$-period $p$ if $\HAM{S[1,n-p],S[p+1,n]}\le k$. 
\end{definition}
The definition of $k$-periodicity lends itself to the following observation.
\begin{observation}
\obslab{obs:num:words}
If $p$ is a $k$-period of $S$, then at most $k$ substrings in the sequence of substrings $S[1,p],S[p+1,2p],S[2p+1,3p],\ldots$ can differ from the preceding substring in the sequence. 
\end{observation}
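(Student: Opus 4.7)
The plan is to set up an injective map from ``transitions'' in the block sequence (that is, pairs of consecutive blocks that disagree) to individual mismatching positions in the Hamming-distance computation $\HAM{S[1,n-p], S[p+1,n]}$. First, I would partition $S$ into consecutive blocks $B_i = S[(i-1)p+1, \min(ip,n)]$, allowing the final block to be shorter than $p$ when $p \nmid n$. For each $i \ge 2$ with $B_i \neq B_{i-1}$, I would pick an offset $j$ with $1 \le j \le |B_i|$ witnessing the disagreement, so that $S[(i-2)p + j] \neq S[(i-1)p + j]$; this inequality is precisely a mismatch at index $(i-2)p + j$ between $S[1, n-p]$ and $S[p+1, n]$.

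Next, I would argue that the witness indices produced for distinct values of $i$ are themselves distinct. The witness index $(i-2)p + j$ for the pair $(B_{i-1}, B_i)$ lies in the interval $[(i-2)p + 1, (i-1)p]$, and these intervals are pairwise disjoint across different choices of $i$. Hence the assignment sending each transition $i$ to its witness index $(i-2)p + j$ is an injection from the set of transitions into the set of mismatches counted by $\HAM{S[1,n-p], S[p+1,n]}$. Since $p$ is a $k$-period, this Hamming distance is at most $k$ by definition, so the number of transitions is at most $k$, which is exactly the claim.

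The only mildly subtle point is the final block when $p \nmid n$, but since in that case $|B_i| = n - (i-1)p$, the witness index $(i-2)p + j$ still satisfies $(i-2)p + j \le n - p$ and therefore remains within the valid comparison range $[1, n-p]$. I do not anticipate any serious obstacle here; the proof is an elementary pigeonhole-style counting argument, and the main care is simply to ensure the witness indices are valid and pairwise distinct across the transitions.
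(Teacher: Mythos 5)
Your argument is correct: the paper states this observation without proof, and your injection from block transitions to mismatch positions --- each differing consecutive pair $(B_{i-1},B_i)$ yields a witness index in $[(i-2)p+1,(i-1)p]$, and these intervals are pairwise disjoint, so the number of transitions is bounded by $\HAM{S[1,n-p],S[p+1,n]}\le k$ --- is exactly the reasoning the paper implicitly relies on (compare the proof of \lemref{lem:recover:prints}, where each index $j$ with $u_j\neq u_{j+1}$ is said to correspond to at least one mismatch). The only caveat is the final partial block when $p\nmid n$: if it happens to be a prefix of its predecessor it is unequal as a string yet admits no positional witness, but under the natural reading in which blocks of unequal length are compared only on their common positions (which is how the observation is actually used), your proof is complete.
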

Finally, we use the following definition of wildcard-periodicity:
\begin{definition}
We say that a string $S$ has wildcard-period $p$ if there exists an assignment to the wildcard characters, so that $S[1,n-p]=S[p+1,n]$ (i.e., the resulting string has period $p$. See \exref{ex:wildcard:period}).
\end{definition}
Note that the determinism of the assignments of the characters is very important, as evidenced by \exref{ex:wildcard:wrong}.
\begin{example}
\exlab{ex:wildcard:wrong}
Consider the string $S=aaa\bot bbb$. To check whether $S$ has wildcard-period $1$, we must compare $S[1,n-1]=aaa\bot bb$ and $S[2,n]=aa\bot bbb$. 
At first glance, one might think assigning the character `$b$' to the wildcard in the prefix $S[1,n-1]$  and an `$a$' in the suffix $S[2,n]$ will make the prefix and the suffix identical. 
However, this is not a legal move; there is not a single character that the wildcard can be replaced with that makes the
above prefix and the suffix the same. 
Thus, $S$ does not have a wildcard-period of 1.
\end{example} 
The following example emphasizes the difference between $k$-periodicity and wildcard-periodicity:
\begin{example}
For $k=1$, the string $S=aaaaabbbbb$ has $k$-period $p=1$. However, to obtain wildcard-period $p=1$, at least five characters in $S$ must be changed to wildcards (for example, all of the characters `$a$' or `$b$'). 
\end{example}
Therefore, $k$-periodicity is a good notion for capturing periodicity with respect to long-term, persistent changes, while wildcard-periodicity is a good notion for capturing periodicity against a number of symbols that are errors or erasures.

We shall require data structures and subroutines that allow comparing of strings with mismatches. The below useful fingerprinting algorithm utilizes Karp-Rabin fingerprints \cite{KarpR87} to obtain general and important properties: 
\begin{theorem}\cite{KarpR87}
\thmlab{thm:kr:fingerprints}
Given two strings $S$ and $T$ of length $n$, there exists a polynomial encoding that uses $\O{\log n}$ bits of space, and outputs whether $S=T$ or $S\neq T$. Moreover, this encoding supports concatenation of strings and can be done in the streaming setting.
\end{theorem}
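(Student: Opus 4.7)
The plan is to present the standard Karp--Rabin construction and verify each claimed property (space, correctness, concatenability, and streamability) in turn. First, I would pick a prime $q$ of size $\Theta(n^c)$ for a suitably large constant $c$ (say $c\geq 3$) so that $\log q = \O{\log n}$, and then sample a base $r$ uniformly from $\mathbb{Z}_q$. For a string $S$ of length $m\leq n$ over $\Sigma$ (viewing characters as integers mod $q$), I would define the fingerprint
\[
\phi(S) \;=\; \sum_{i=1}^{m} S[i]\cdot r^{\,i-1} \pmod q.
\]
The fingerprint itself is a single residue in $\mathbb{Z}_q$, so it occupies $\O{\log n}$ bits; storing the length $m$ (or equivalently $r^m\bmod q$) adds another $\O{\log n}$ bits, which matches the claimed bound.

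Next, I would verify one-sidedness and the collision probability. If $S=T$ then obviously $\phi(S)=\phi(T)$. If $S\neq T$, consider the polynomial $P(x) = \sum_{i}(S[i]-T[i])\,x^{i-1} \bmod q$; this polynomial is nonzero and has degree at most $n-1$, hence at most $n-1$ roots in $\mathbb{Z}_q$. Since $r$ is uniform over a set of size $q=\Theta(n^c)$, the probability of a collision is at most $(n-1)/q = \O{1/n^{c-1}}$, which can be driven below any inverse polynomial by choosing $c$ large enough. A union bound over the polynomially many fingerprint comparisons performed by the calling algorithm then certifies correctness with high probability.

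For concatenation, the key identity is
\[
\phi(S\circ T) \;=\; \phi(S) \;+\; r^{|S|}\cdot \phi(T) \pmod q,
\]
which is immediate from the definition; this is why we must also carry the length (or the precomputed power $r^{|S|}$) alongside each fingerprint. For the streaming setting, appending a single character $c$ to a length-$m$ stream amounts to the update $\phi \gets \phi + c\cdot r^{m}\bmod q$ and $r^{m}\gets r\cdot r^{m}\bmod q$, both $\O{1}$ arithmetic operations on $\O{\log n}$-bit numbers.

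The main obstacle, and the only genuinely delicate step, is the probabilistic analysis: the base $r$ is fixed once at the start of the stream, yet the algorithm may perform adaptively chosen comparisons that depend on earlier fingerprint outcomes. I would handle this by enlarging $q$ so that a union bound over all $\mathrm{poly}(n)$ possible comparisons still yields error $o(1/n)$, and by emphasizing that the comparisons are over a fixed polynomial family determined by the input stream (not by $r$), so the standard root-counting argument applies. The remaining items---$\O{\log n}$ space, support for concatenation, and constant-time streaming updates---then follow mechanically from the construction.
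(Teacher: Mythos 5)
Your construction is the standard Karp--Rabin fingerprint and is correct; the paper itself gives no proof of this statement, citing \cite{KarpR87} as a black box, so there is nothing to diverge from. The only detail worth making explicit is that the characters of $\Sigma$ must embed injectively into $\mathbb{Z}_q$ (i.e., $|\Sigma|<q$) so that the difference polynomial $P(x)$ is genuinely nonzero when $S\neq T$; with $q=\Theta(n^c)$ for large enough $c$ this is immediate, and the rest of your argument (root counting, the concatenation identity, and the $\O{1}$-arithmetic streaming update) is exactly as it should be.
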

From here, we use the term \emph{fingerprint} to refer to this data structure.
We will also need use an algorithm for pattern matching with mismatches, which we call the $k$-mismatch algorithm.
\begin{theorem}\cite{CliffordFPSS16}
\thmlab{thm:kmismatch}
Given a string $S$ and an index $x$, there exists an algorithm which, with probability $1-\frac{1}{n^2}$, outputs all indices $i$ where $\HAM{S[1,x],S[i+1,i+x]}\le k$ using $\O{k^2\log^8 n}$ bits of space.
Moreover, the algorithm runs in $\O{k^2\,\polylog\,n}$ amortized time per arriving symbol.
\end{theorem}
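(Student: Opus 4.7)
The plan is to build the algorithm around a \emph{$k$-mismatch sketch}: a compressed digest $\sigma(U)$ of a string $U$ of total size $\O{k\,\polylog n}$ bits such that, given $\sigma(U)$ and $\sigma(V)$ with $|U|=|V|$, one can with probability at least $1-\frac{1}{n^3}$ either certify $\HAM{U,V}>k$ or explicitly recover all (at most $k$) mismatching positions together with the disagreeing symbols. I would construct such a sketch by layering $\O{k\log n}$ Karp--Rabin fingerprints (\thmref{kr:fingerprints}) over geometrically subsampled index sets, combined with a small error-correcting code on the alphabet, so that the sketch supports concatenation and ``prefix-subtraction,'' and therefore admits sliding-window maintenance in $\polylog n$ time per coordinate.

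With that primitive in hand, I would compute and store $\sigma(P)$ for the pattern $P := S[1,x]$ as the first $x$ symbols arrive, and then dispatch on whether $P$ has a $k$-period $\rho \le x/(4k)$. In the \emph{aperiodic} case, a Fine--Wilf-style argument implies that any two $k$-mismatch occurrences of $P$ are at least $x/(4k)$ positions apart, so at any moment only $\O{k}$ candidate windows are live; the algorithm maintains a sliding sketch of each and tests it against $\sigma(P)$ once it fills. In the \emph{periodic} case, $P$ is close to $Q^{x/\rho}$ for some block $Q$ of length $\rho$, so we instead store $\sigma(Q)$ together with the $\O{k}$ exception positions of $P$ relative to $Q^{x/\rho}$; the $k$-mismatch occurrences in the text then cluster into arithmetic progressions of common difference $\rho$ inside maximal runs where the stream matches $Q^*$ up to $\O{k}$ errors, and each such run is processed by comparing the exception set of the window with that of $P$ via sketch arithmetic.

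The main obstacle is the periodic case: keeping the exception sketches correctly aligned under $\rho$-shifts as new symbols stream in, and cleanly handling transitions between overlapping $Q^*$-runs without double-counting or dropping occurrences. The standard remedy is a Galil--Seiferas-style restart rule that resynchronizes whenever the stream's approximate period departs from $\rho$, together with a doubling argument on the block count inside a run. Finally, choosing fingerprint moduli of polynomial size in $n$ and union-bounding the sketch failure probability over the at most $n^2$ queries issued gives the claimed $1 - \frac{1}{n^2}$ correctness, while each arriving symbol touches only $\O{k}$ active sketches of $\O{k\,\polylog n}$ bits apiece in $\polylog n$ time per coordinate, yielding the $\O{k^2 \log^8 n}$ space and $\O{k^2\,\polylog n}$ amortized time bounds.
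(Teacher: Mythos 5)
This theorem is not proved in the paper at all: it is a black-box import from \cite{CliffordFPSS16}, so there is no in-paper argument to compare yours against. Measured against the actual proof in that reference, your outline is faithful to its architecture: a sketch of size $\O{k\,\polylog n}$ that not only detects but \emph{recovers} up to $k$ mismatch positions, a dichotomy on whether the pattern $S[1,x]$ has a small approximate period, the observation that in the aperiodic case occurrences with at most $k$ mismatches are well separated so only $\O{k}$ windows are ever live, and arithmetic-progression clustering of occurrences inside approximately periodic runs in the periodic case. So as a roadmap it is the right roadmap.

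As a proof, however, the two places where all the difficulty lives are asserted rather than established. First, the mismatch-\emph{recovering} sketch: plain Karp--Rabin fingerprints over subsampled index sets give you equality tests and, with care, an estimate of the \emph{number} of mismatches, but recovering the mismatching \emph{positions and symbols} is the technical heart of \cite{CliffordFPSS16} (they use Reed--Solomon-style algebraic sketches for exactly this), and ``combined with a small error-correcting code on the alphabet'' does not yet constitute a construction; you also need to argue the sketch composes under concatenation and prefix-subtraction \emph{while preserving decodability}, which is nontrivial. Second, in the aperiodic case your separation claim needs the correct quantifier: two occurrences of $P$ each with at most $k$ mismatches force a $2k$-period of $P$ equal to their offset, so you must assume $P$ has no $2k$-period (not merely no $k$-period) below the threshold $x/(4k)$, and the periodic branch must be entered under the complementary hypothesis. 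Finally, a small arithmetic slip: a per-query failure probability of $n^{-3}$ union-bounded over $n^{2}$ queries yields only $1-\frac{1}{n}$, not $1-\frac{1}{n^{2}}$; you need $n^{-4}$ per query or a count of $\O{n}$ queries. None of this suggests your approach would fail --- it is the approach of the cited work --- but the proposal is an outline of that proof, not a self-contained replacement for it.
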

Concurrent with our work, Clifford \etal\,\cite{CliffordKP17} provide a nearly-optimal solution to the $k$-mismatch algorithm, which can potentially be used in the framework of \cite{ErgunGSZ17} to immediately improve over the existing $k$-periodicity algorithms.

\section{Our Approach}
\seclab{sec:overview}
To find all the wildcard-periods of $S$, during our first pass we determine a set $\mathcal{T}$ of \emph{candidate} wildcard-periods, similar to the approach in \cite{ErgunGSZ17}, that includes all the true wildcard-periods. 
We also determine a set $\mathcal{W}$ of positions of the wildcard characters. 
By a structural result (\lemref{lem:num:prints}), we can then use the second pass to verify the candidates and identify the true wildcard-periods.

Pattern matching and periodicity seem to have a symbiotic relationship (for example, exact pattern matching and exact periodicity use each other as subroutines \cite{KnuthMP77, ErgunJS10}, as do $k$-mismatch pattern matching \cite{CliffordFPSS16} and $k$-periodicity \cite{ErgunGSZ17}). 
It feels tempting and natural to try to apply the algorithm from \cite{GolanKP16} for pattern matching with wildcards. 
Unfortunately, there does not seem to be an immediate way of doing this: the \cite{GolanKP16} algorithm searches for a wildcard-free pattern in text containing up to $k$ wildcards, while we would like to allow wildcards in the pattern \emph{and} the text. 
We instead choose to use the $k$-mismatch algorithm from \cite{CliffordFPSS16} in the first pass and obtain new structural results about possible assignments to the wildcard characters in the second pass.

In the first pass, we treat wildcards simply as an additional character. 
We let $\mathcal{T}$ be the set of indices (candidate periods) $\pi$ that satisfy 
\[\HAM{S[1,x],S[\pi+1,\pi+x]}\le 2k,\]
for some appropriate value of $x$ that we specify later. 
Note that each wildcard character can cause up to two mismatches; thus, all true wildcard-periods must satisfy the above inequality. 
We show that $\mathcal{T}$ can be easily compressed, even though it may contain a linear number of candidates. 
Specifically, we can succinctly represent $\mathcal{T}$ by adding a few additional ``false candidates'' into $\mathcal{T}$. 

If the correct assignments of the wildcards were known a priori, then the problem would reduce to determining exact periodicity. 
Unfortunately, we do not know the correct assignments to the wildcard characters prior to the data stream, so most of the difficulty lies in the guessing of assignments, bounding the total number of assignments, and storing these assignments. 
Thus, the main difference between wildcard-periodicity and both exact periodicity and $k$-periodicity is the process of verifying candidates. 
Whereas exact and $k$-periodicity can be verified by comparing the number of mismatches between the prefix and suffix of length $n-p$, wildcard-periodicity is sensitive to the correct assignments of the wildcards. 
We address this challenge by noting $\mathcal{W}$, the positions of the wildcard characters in the first pass. 
Since we also have the list of candidate wildcard-periods following the first pass, we can guess the assignments of the wildcard characters in the second pass by looking at the characters in a few select locations, as in \exref{ex:wildcard:guesses}.
\begin{example}
\exlab{ex:wildcard:guesses}
The string $S=ababa\bot ab$ has wildcard-period $p=2$. The assignment of the wildcard at position $i=6$ must be the characters at positions $i\pm p$. Note that $S[i+p]=S[8]=b$ and $S[i-p]=S[4]=b$.
\end{example}
From \exref{ex:wildcard:guesses}, we observe the following:
\begin{observation}
\obslab{obs:wildcard:guesses}
If $S$ has wildcard-period $p$ and a wildcard character is known to be at position $i$, then the assignment of the wildcard must be the character $S[i\pm ap]$, for some integer $a$, that is not a wildcard.
\end{observation}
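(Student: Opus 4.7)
The plan is to invoke the definition of wildcard-period directly and propagate equalities along the arithmetic progression of positions congruent to $i$ modulo $p$. First, I would use the assumption that $S$ has wildcard-period $p$: by definition, there exists an assignment $\sigma$ to each of the wildcard characters producing a string $S_\sigma$ with $S_\sigma[j] = S_\sigma[j+p]$ for every valid index $1 \le j \le n-p$. Iterating this equality yields $S_\sigma[j] = S_\sigma[j+ap]$ for every integer $a$ (positive or negative) such that both $j$ and $j+ap$ lie in $[1,n]$.

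Next, I would apply this to $j = i$, the known wildcard position. Consider the arithmetic progression $I = \{\,i + ap : a \in \mathbb{Z}\,\} \cap [1,n]$; by the previous step, all entries of $S_\sigma$ indexed by $I$ share a common value, call it $c$. Since $\sigma$ only rewrites positions where $S$ contains $\bot$ and leaves every other position untouched, whenever some index $i + ap \in I$ satisfies $S[i+ap] \neq \bot$, we must have $S[i+ap] = S_\sigma[i+ap] = c$. Thus $c$ is forced to equal the character appearing at any non-wildcard position of $I$, which is exactly the content of the observation.

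There is no real obstacle in this argument; it is essentially a direct consequence of unwinding the definition of wildcard-period and chasing equalities along positions at distance $p$. The only subtlety worth flagging is that the statement carries the implicit proviso that at least one position in $I$ is a non-wildcard; if every index $i + ap \in [1,n]$ were itself a wildcard, the correct assignment would be underdetermined by periodicity alone. This degenerate case is rare in our setting, since there are at most $k = o(n)$ wildcards total while $|I|$ is roughly $n/p$, so for sufficiently small $p$ a non-wildcard witness in $I$ is guaranteed to exist and will be exploited in the second pass to fix each wildcard's value.
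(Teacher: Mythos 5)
Your proposal is correct and matches the paper's (implicit) reasoning: the paper offers no formal proof, only the illustration in \exref{ex:wildcard:guesses}, and the observation is exactly the direct unwinding of the definition you give — iterate $S_\sigma[j]=S_\sigma[j+p]$ along the progression $i+ap$ and note that $\sigma$ fixes non-wildcard positions. Your caveat about the degenerate case where every position $i+ap$ is a wildcard is a fair point the paper leaves implicit, but it does not affect correctness.
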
 
We show how to use \obsref{obs:wildcard:guesses} and the compressed version of $\mathcal{T}$ in the second pass to verify the candidates and output the true wildcard-periods of $S$.

We note that recent algorithmic improvements to the $k$-mismatch problem \cite{CliffordKP17} use $\O{k\log^2 n}$ space. 
Using this algorithm in place of \thmref{thm:kmismatch} as a subroutine in our algorithms improves the space usage to $\O{k^3\log^3 n}$ bits in the two-pass algorithm.

%Finally, we observe that \cite{ErgunJS10} shows computing the period of a string in one-pass requires $\Omega(n)$ space. 
%Since the problem of periodicity for strings containing wildcards is a generalization of exact periodicity, the same lower bound applies.
%\begin{theorem}[Implied from Theorem 3 from \cite{ErgunJS10} and Theorem 16 from \cite{ErgunGSZ17}]
%Given a string $S$ with at most $k$ wildcard characters, any one-pass streaming algorithm that computes the smallest wildcard-period requires $\Omega(n)$ space.
%\end{theorem}
\section{Two-Pass Algorithm to Compute Wildcard-Periods}
\seclab{sec:twopass}
In this section, we provide a two-pass, $\O{k^3\log^9 n}$-space algorithm to output all wildcard-periods of some string $S$ containing at most $k$ wildcard characters. 
At a high level, we first identify a list of candidates of the periods of $S$, detected via the $k$-mismatch algorithm of \cite{CliffordFPSS16} as a black box. 
Although the number of candidates could be linear, it turns out the string has enough structure that the list of candidates can be succinctly expressed as the union of $k$ arithmetic progressions.

However, this list of candidates is insufficient in identifying the possible assignments to the wildcard characters. 
To address this issue, we explore the structure of periods with wildcards in order to limit the possible assignments for each wildcard character. 
Thus, the first pass also records $\mathcal{W}$, the positions of all wildcard characters so that during the second pass, we go over $S$ as well as the compressed data to verify the candidate periods. 

We present two algorithms in parallel to find the periods, based on their lengths. 
The first algorithm identifies all periods $p$ with $p\le\frac{n}{2}$, while the second algorithm identifies all periods $p$ with $p>\frac{n}{2}$.

\subsection{Computing Small Wildcard-Periods}
In this section, we describe a two-pass algorithm for finding wildcard-periods of length at most $n/2$. 
The first pass of the algorithm identifies a set $\mathcal{T}$ of candidate wildcard-periods in terms of indices of $S$, and maintains its succinct representation $\mathcal{T}^C$, which includes a number of additional indices. 
It also records $\mathcal{W}$, the positions of all wildcard characters. 
The second pass of the algorithm recovers each index of $\mathcal{T}$ from $\mathcal{T}^C$ and verifies whether or not the index is a wildcard-period. 
We can find the assignments of the wildcard characters in the second pass, by looking at the characters in a few locations that we determine via $\mathcal{W}$. 
We emphasize the following properties of $\mathcal{T}$ and $\mathcal{T}^C$: 
\begin{enumerate}
\item
\label{prop:1}
All wildcard-periods (possibly as well as additional candidate wildcard-periods that are false positives) are in $\mathcal{T}$.

\item
\label{prop:2}
$\mathcal{T}^C$ can be stored in sublinear space and $\mathcal{T}$ can be fully recovered from $\mathcal{T}^C$.

\item
\label{prop:3}
In the second pass, we can verify and eliminate in sublinear space candidates that are not true periods.
\end{enumerate}

In the first pass, we treat the wildcard characters as a regular, additional alphabet symbol. 
We observe that if string $S$ with such wildcards has wildcard-period $p$, there are at most $2k$ indices $i$ such that $S[i]\neq S[i+p]$, caused by the wildcard characters (the converse is not necessarily true). 
It follows that any wildcard-period $p$ must satisfy
\[\HAM{S[1,x],S[p+1,p+x]}\le 2k\]
for all $x\le n-p$, and specifically for $x=\frac{n}{2}$. 
Thus, we set $x=\frac{n}{2}$ and refer to any index $p$ that satisfies $\HAM{S[1,x],S[p+1,p+x]}\le 2k$ as a \emph{candidate wildcard-period.} 
The set of all candidate wildcard-periods forms the set $\mathcal{T}$. 
Because $\HAM{S[1,x],S[p+1,p+x]}\le 2k$ is a necessary but not sufficient condition for a wildcard-period $p$, \hyperref[prop:1]{Property 1} follows.

We give the first pass of the algorithm in full in \algoref{alg:first:pass}.

\begin{algorithm}[H]
\caption{(To determine any wildcard-period $p$ with $p\leq\frac{n}{2}$) First pass}
\alglab{alg:first:pass}
\textbf{Input:} A stream $S$ of $n$ symbols $s_i\in\Sigma\cup\,\{\bot\}$
 with at most $k$ wildcard characters $\bot$.\\
\textbf{Output:} A succinct representation of all candidate wildcard periods and the positions of the wildcard characters.
\begin{algorithmic}[1]
\State{initialize $\pi_j=-1$ for each $0\le j< 4k\log n+2$.}
\State{initialize $\mathcal{T}^C=\emptyset$.}
\For{each index $i$ (found using the $k$-mismatch algorithm) such that
\[\HAM{S\left[1,\frac{n}{2}\right],S\left[i+1,\frac{n}{2}+i\right]}\le 2k\]
}
\State{consider  $j$ for which $i$ is in the interval $H_j=\left[\frac{jn}{4(2k\log n+1)}+1,\frac{(j+1)n}{4(2k\log n+1)}\right):$}
\If{there exists no candidate $t\in\mathcal{T}^C$ in the interval $H_j$}
\State{add $i$ to $\mathcal{T}^C$.}
\Else
\State{let $t$ be the smallest candidate in $\mathcal{T}^C\cap H_j$ and either $\pi_j=-1$ or $\pi_j>0$.}
\If{$\pi_j=-1$}
\State{set $\pi_j=i-t$.}
\Else
\State{set $\pi_j=\gcd{\pi_j,i-t}$.}
\EndIf
\EndIf
\EndFor
\State{record the positions $\mathcal{W}$ of all wildcard characters.}
\end{algorithmic}
\end{algorithm}
 
Here, we show why the remaining properties for $\mathcal{T}$ and $\mathcal{T}^C$ are satisfied. 
Our algorithm divides the candidates into $\O{k\log n}$ ranges $H_1, H_2,\ldots, H_{\O{k\log n}}$ and stores the candidates in each range $H_j=\left[\frac{jn}{4(2k\log n+1)}+1,\frac{(j+1)n}{4(2k\log n+1)}\right)$ in compressed form as an
arithmetic series.

Since we use the $k$-mismatch algorithm in the first pass, we describe a structural property of the resulting list of candidates:
\begin{theorem}
\thmlab{thm:kperiod}\cite{ErgunGSZ17}
Let $p_i$ be a candidate $k$-period for a string $S$, with $p_1<p_2<\ldots<p_m$ all contained within $H_j$. 
Given the fingerprints of $S[1,n-p_1]$ and $S[p_1+1,n]$, we can determine whether or not $S$ has $k$-period $p_i$ for any $1\le i\le m$ by storing at most $\O{k^2\log n}$ additional fingerprints. 
These fingerprints represent substrings of the form $S[p_1+a\pi_j,p_1+(a+1)\pi_j-1]$, where $a>0$ is an integer and $\pi_j=\gcd{p_2-p_1,p_3-p_2,\ldots,p_m-p_{m-1}}$.
\end{theorem}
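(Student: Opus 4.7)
The plan is to exploit two facts: first, by the definition of $\pi_j$ we have $\pi_j \mid (p_i - p_1)$ for every $i$, so $p_i = p_1 + c_i \pi_j$ for some nonnegative integer $c_i$; second, Karp-Rabin fingerprints (\thmref{thm:kr:fingerprints}) support concatenation and subtraction, so the fingerprint of any contiguous substring of $S$ can be assembled from or peeled off a larger enclosing substring given the fingerprints of the ``trimmed'' pieces. Because $S[1,n-p_i]$ is the prefix of $S[1,n-p_1]$ obtained by trimming exactly $c_i\pi_j$ characters from the right, and $S[p_i+1,n]$ is the suffix of $S[p_1+1,n]$ obtained by trimming exactly $c_i\pi_j$ characters from the left, the shortening that takes us from $p_1$ to $p_i$ is by construction a union of whole $\pi_j$-aligned blocks of the form $S[p_1 + a\pi_j, p_1 + (a+1)\pi_j - 1]$.

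Next, I would carry out the verification itself using the standard binary-search $k$-mismatch routine on the two strings $S[1,n-p_i]$ and $S[p_i+1,n]$: repeatedly bisect, use fingerprint equality to detect which half harbors a mismatch, extract a mismatch index, and recurse. This produces up to $k$ mismatch positions in $O(k\log n)$ fingerprint equality checks, and each check queries the fingerprint of a contiguous substring of one of these two strings. I would then express every such query fingerprint, via Karp-Rabin arithmetic, in terms of (a) the two given base fingerprints $\mathrm{fp}(S[1,n-p_1])$ and $\mathrm{fp}(S[p_1+1,n])$, and (b) fingerprints of $\pi_j$-aligned blocks anchored at $p_1$. The point is that any boundary inside $S[p_1+1,n]$ snaps to the nearest $\pi_j$-aligned boundary, and the leftover partial block is reusable because $p_1$ is itself a candidate period with at most $k$ mismatches, so shifts by $p_1$ collapse many would-be distinct partial blocks back onto the base fingerprints.

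Finally I would count: across all $m$ candidates, the binary-search tree of recursive fingerprint queries has at most $O(\log n)$ depth and at most $O(k)$ leaves that carry real mismatches; the remaining nodes only test equality and collapse once the leftover partial block maps back to the base fingerprint. An amortized charging argument attributes each distinct $\pi_j$-aligned block fingerprint actually touched to one of the $O(k)$ genuine mismatch positions of the $p_1$-shift, times an $O(k\log n)$ factor coming from depth-$\log n$ binary search repeated across up to $k$ mismatches in the candidate being tested, giving the claimed $O(k^2\log n)$ bound.

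The main obstacle will be the final accounting: arguing that although each individual candidate $p_i$ naively requires $\Omega(k\log n)$ fresh block queries, \emph{the union} over all $m$ candidates and all recursion levels shares enough blocks so that the total storage is $O(k^2\log n)$ and not $O(mk\log n)$ or $O(k^2\log^2 n)$. This is where the Fine-and-Wilf-flavored structural fact that $\pi_j$ divides every $p_i - p_1$, combined with $p_m - p_1 < |H_j|$, must be invoked to force repeated blocks across queries to coincide.
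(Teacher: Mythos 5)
Your opening observation --- that $\pi_j$ divides $p_i-p_1$, so passing from $p_1$ to $p_i$ trims whole $\pi_j$-aligned blocks --- matches the intended argument, but from there the proposal diverges onto a route that does not work, and it omits the one structural fact the theorem rests on. The paper imports this statement from \cite{ErgunGSZ17} without proof, but the same mechanism is spelled out in its proof of \lemref{recover:prints}: decompose $S=v\circ v_1\circ v_2\circ\cdots$ with $|v|=p_1$ and $|v_i|=\pi_j$, and invoke the fact that $\pi_j$ --- being the gcd of differences of candidates each incurring $O(k)$ mismatches --- is itself an $\O{k^2\log n}$-mismatch period of $S$. By \obsref{num:words}, at most $\O{k^2\log n}$ of the blocks $v_i$ differ from their predecessor, so it suffices to store a fingerprint and position at each such change point; every $S[1,n-p_i]$ and $S[p_i+1,n]$ is then assembled by concatenation from the two base fingerprints and these stored blocks. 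Your proposal never establishes, or even states, that the number of consecutive $\pi_j$-blocks that differ from their predecessor is $\O{k^2\log n}$; without that, the number of block fingerprints touched can be $\Theta(n/\pi_j)$, and no charging argument over query patterns can recover the bound, because the bound is a property of the string's structure rather than of how many queries the verifier issues.

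The second, independent problem is the verification mechanism. Your binary-search $k$-mismatch routine needs fingerprints of substrings whose endpoints are bisection points chosen adaptively according to where mismatches turn out to lie. In the streaming model those fingerprints must be committed to while the characters go by, before the bisection points are known; ``snapping to the nearest $\pi_j$-aligned boundary'' leaves a partial block whose fingerprint you have no way to produce, and the remark that it is reusable because $p_1$ is a candidate period does not supply one. The paper's verification avoids this entirely: it is a single equality test between two fingerprints, each built purely from $p_1$-anchored, $\pi_j$-aligned pieces fixed in advance. The obstacle you flag in your final paragraph --- the accounting --- is exactly where the proof actually lives, and the Fine--Wilf-flavored fact you gesture at must be upgraded to the concrete statement that $\pi_j$ is an $\O{k^2\log n}$-period before the count goes through.
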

The structural property can be visualized in \figref{fig:tchj}.
\figtchj
Even though the list of candidates could be linear in size, \thmref{thm:kperiod} enforces a structure upon the list of candidates, so that an arithmetic sequence with first term $p_1$ and common difference $d$ includes all of $p_1,p_2,\ldots,p_m$. 
Thus, we can succinctly represent a superset $\mathcal{T}^C$ that contains $\mathcal{T}$ and \hyperref[prop:2]{Property 2} follows.

We now show that any wildcard period $p$ is included among the list of candidates stored by \algoref{alg:first:pass} during the first pass, and can be recovered from the list.
\begin{lemma}
\lemlab{lem:recover:indices}
If $p<\frac{n}{2}$ is a period and $p\in H_j$, then $p$ can be recovered from $\mathcal{T}^C$ and $\pi_j$.
\end{lemma}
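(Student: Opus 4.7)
The plan is to exploit the arithmetic-progression structure of candidates inside each bucket $H_j$, guaranteed by \thmref{thm:kperiod}, to show that every true wildcard-period in $H_j$ is expressible as $t + a\pi_j$ for some integer $a\ge 0$, where $t$ is the (unique) element of $\mathcal{T}^C\cap H_j$.

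First I would argue that $p$ is a candidate, i.e., $p\in\mathcal{T}$. Since $p$ is a true wildcard-period with $p<\frac{n}{2}$, there is a legal assignment under which $S[1,n-p]=S[p+1,n]$. Each of the at most $k$ wildcard positions contributes at most two mismatches when wildcards are treated as an extra alphabet symbol (one on the prefix side and one on the suffix side), so $\HAM{S[1,\frac{n}{2}],S[p+1,\frac{n}{2}+p]}\le 2k$. Hence $p$ is one of the indices enumerated by the $k$-mismatch routine in \algoref{alg:first:pass}.

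Next, list the candidates in $H_j$ in the order produced by the stream as $p_1<p_2<\cdots<p_m$; by construction of \algoref{alg:first:pass}, $\mathcal{T}^C\cap H_j=\{p_1\}=\{t\}$, and after processing the whole stream
\[
\pi_j=\gcd{p_2-p_1,\;p_3-p_1,\;\ldots,\;p_m-p_1}=\gcd{p_2-p_1,\;p_3-p_2,\;\ldots,\;p_m-p_{m-1}},
\]
where the second equality is the standard identity $\gcd(a-b,c-b)=\gcd(a-b,c-a)$ iterated. This is exactly the $\pi_j$ appearing in \thmref{thm:kperiod}. Since $p\in H_j$ is itself a candidate, $p$ equals $p_\ell$ for some $\ell$, and $p-t=p_\ell-p_1$ is a sum of the consecutive gaps $p_2-p_1,\ldots,p_\ell-p_{\ell-1}$, each of which is divisible by $\pi_j$. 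Therefore $p=t+a\pi_j$ for some integer $a\ge 0$.

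Finally, recovery from $\mathcal{T}^C$ and $\pi_j$ is immediate: knowing $t$ and $\pi_j$, enumerate $t,t+\pi_j,t+2\pi_j,\ldots$ within $H_j$; the true period $p$ appears in this list. The main conceptual point (and the only nontrivial step) is that the GCD accumulated online in \algoref{alg:first:pass} coincides with the common difference furnished by \thmref{thm:kperiod}; everything else is bookkeeping. Verifying which of these recovered indices is an actual wildcard-period is deferred to the second pass and is not needed for this lemma.
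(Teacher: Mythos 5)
Your proposal is correct and follows essentially the same route as the paper's proof: you show $p$ must be reported as a candidate because the wildcards contribute at most $2k$ mismatches, and then argue that the online GCD accumulation in \algoref{alg:first:pass} guarantees $p-t$ is a multiple of the final $\pi_j$, so $p$ lies in the arithmetic progression generated by $t$ and $\pi_j$. Your added observation that the accumulated $\gcd{p_2-p_1,\ldots,p_m-p_1}$ equals the $\gcd$ of consecutive gaps from \thmref{thm:kperiod} is a harmless elaboration of the same argument.
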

\begin{proof}
Suppose $p\in H_j$ is a wildcard period. 
Then there exists an assignment to the wildcard characters such that $S[1,n-p]=S[p+1,n]$. 
It follows that for $i=p$, 
\[\HAM{S\left[1,\frac{n}{2}\right],S\left[i+1,\frac{n}{2}+i\right]}\le 2k,\]
so the index $i=p$ will be reported by the $k$-mismatch algorithm in the first pass. 

If at that time during Pass 1 there is no other index in $\mathcal{T}^C\cap H_j$, then $p$ will be inserted into $\mathcal{T}^C$, so $p$ can clearly be recovered from $\mathcal{T}^C$.
If there is another index $q$ in $\mathcal{T}^C\cap H_j$, then $\pi_j$ will be updated to be a divisor of $p-q$. 
Hence, $p-q$ is a multiple of $\pi_j$. 
Furthermore, any future update to $\pi_j$ will result in a value that divides the current value of $\pi_j$, due to a greatest common divisor operation. 
Thus, $p-q$ will remain a multiple of the final value of $\pi_j$, and so the set $\mathcal{T}$ at the end of the first pass will contain $p$.
\end{proof}
It remains to show that the list of candidate wildcard-periods can be verified in sublinear space in the second pass (\hyperref[prop:3]{Property 3}). 
To do this, we need a combinatorial property for periodicity on strings with wildcard characters.

\subsection{Verifying Candidates}
Recall that after the first pass, the algorithm maintains $\O{k\log n}$ succinctly represented arithmetic progressions $H_j$, corresponding to the candidate wildcard periods. 
The algorithm also maintains $\mathcal{W}$, the list of positions of wildcard characters in $S$. 
In the second pass, the algorithm must check, for each $t\in H_j$, $0\le j< 2k\log n+2$, whether $S[1,n-t]=S[t+1,n]$ for an appropriate setting of the wildcard characters. 
The challenge is computing the fingerprints of both $S[1,n-t]$ and $S[t+1,n]$ in sublinear space, especially if the number of candidates $t$ is linear. 

We first set a specific $j$ and note that for the smallest candidate $t\in H_j$, there are at most $\O{k^2\log n}$ unique substrings $S[t+1,t+\pi_j]$, $S[t+\pi_j+1,t+2\pi_j]$, $S[t+2\pi_j+1,t+3\pi_j],\ldots$. 
Since any other candidate $r\in H_j$ satisfies $r=t+a\pi_j$ for some integer $a>0$, then $S[t+1,n]$ is the concatenation 
\[S[t+1,t+\pi_j]\circ S[t+\pi_j+1,t+2\pi_j]\circ\cdots\circ S[t+(a-1)\pi_j+1,t+a\pi_j]\circ S[r+1,n].\]
Thus, by storing $\O{k^2\log n}$ fingerprints and positions, we can recover the fingerprint of the substring $S[r+1,n]$ for each $r\in H_j$. 

The second obstacle is handling wildcard characters in the computation of the fingerprints of $S[1,n-t]$ and $S[t+1,n]$. 
To address this challenge, our algorithm delays the calculation of the contribution of wildcard characters to the fingerprints until we know the assignment of the wildcard character with respect to a candidate period. 
We show that for a specific $j$, then there are at most $\O{k^2\log n}$ possible assignments for the wildcard character $S[w]=S[w\pm t]$ with respect to all candidates $t\in H_j$, across all $w\in\mathcal{W}$, where $\mathcal{W}$ is the positions of all wildcard characters recorded by \algoref{alg:first:pass}. 
Therefore, we can compute the assignment for each wildcard character with respect to a candidate period in the second pass, and then compute the fingerprint of $S[1,n-t]$ and $S[t+1,n]$.
\begin{lemma}
\lemlab{lem:num:prints}
For a given $j$, $t\in H_j$ and $w\in\mathcal{W}$, let $\sigma_t(w)$ denote the assignment of $S[w]$. 
Then $|\{\sigma_t(w)\}|=\O{k^2\log n}$.
\end{lemma}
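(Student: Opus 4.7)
My plan is to fix $j$ and $w \in \mathcal{W}$ and bound $|\{\sigma_t(w) : t \in H_j\}|$ directly. By \algoref{alg:first:pass}, every candidate in $H_j \cap \mathcal{T}$ has the form $t = t_0 + a\pi_j$ for $a \ge 0$, where $t_0$ is the smallest such candidate. By \obsref{obs:wildcard:guesses}, $\sigma_t(w) = S[w \pm bt]$ for the smallest $b \ge 1$ with $w \pm bt \in [1,n]$ and $S[w \pm bt] \ne \bot$.

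The first step is to isolate the ``typical'' case $b = 1$. I split the candidates in $H_j$ into those with $S[w+t] \ne \bot$ (\emph{typical}) and those with $S[w+t] = \bot$ (\emph{exceptional}), and likewise for $S[w-t]$. Since $S$ has at most $k$ wildcards, at most $O(k)$ candidates $t \in H_j$ can be exceptional on the $+$ side, and at most $O(k)$ on the $-$ side; each such exceptional candidate contributes at most one additional value $\sigma_t(w)$, yielding a total of $O(k)$ distinct assignments from the exceptional cases.

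The core step handles the typical case. Focus on $\sigma_t(w) = S[w+t]$ (the $S[w-t]$ case is symmetric). As $t = t_0 + a\pi_j$ ranges over $H_j$, the positions $w+t = w + t_0 + a\pi_j$ form an arithmetic progression with common difference $\pi_j$, so they share a common offset modulo $\pi_j$ within the natural partition of $S[p_1+1,n]$ (with $p_1 = t_0$) into blocks of length $\pi_j$. I would then invoke \thmref{thm:kperiod}: the $O(k^2\log n)$ additional fingerprints stored there correspond to the ``exceptional'' $\pi_j$-blocks that deviate from a single dominant block pattern, while all remaining blocks are equal to that dominant block. Since $S[w+t]$ at a fixed offset is determined entirely by which block contains the position, the dominant blocks contribute a single value and the exceptional blocks contribute at most $O(k^2\log n)$ further values. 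Combining with the symmetric $S[w-t]$ contribution and the $O(k)$ exceptional-candidate values gives $|\{\sigma_t(w) : t \in H_j\}| \le O(k^2\log n) + O(k^2\log n) + O(k) = O(k^2\log n)$.

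The main obstacle I anticipate is the ``block-decomposition'' interpretation of \thmref{thm:kperiod}: I need that the $O(k^2\log n)$ stored fingerprints truly capture all $\pi_j$-blocks which disagree with a single dominant block in $S[p_1+1,n]$. This demands a careful reading of the underlying structural argument of \cite{ErgunGSZ17} to confirm both that such a dominant block exists and that every unstored block equals it. Minor technicalities to dispatch include candidates $t$ for which $w+t > n$ (handled by falling back on the symmetric $-$ side), candidates for which every position in $\{w \pm bt : b \ge 1\} \cap [1,n]$ is a wildcard (in which case $\sigma_t(w)$ is unconstrained and can be omitted from the set without loss), and checking that the offset argument is unaffected by the inclusion of $a = 0$ (i.e., $t = t_0$ itself).
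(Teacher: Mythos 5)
Your proposal is correct and follows essentially the same route as the paper's proof: both exploit that, as $t$ ranges over the arithmetic progression $t_0+a\pi_j$ of candidates in $H_j$, the probed positions $w\pm t$ form an arithmetic progression with common difference $\pi_j$, so the number of distinct assignments is bounded by the number of distinct $\pi_j$-blocks in the decomposition starting at the smallest candidate, which is $\O{k^2\log n}$ by \thmref{thm:kperiod} and \obsref{obs:num:words}. The one clarification for your anticipated obstacle is that the structural result gives a block sequence that is piecewise constant with $\O{k^2\log n}$ change points (not necessarily a single dominant block plus exceptions), which yields the same bound on the number of distinct block values; your extra handling of candidates for which $S[w\pm t]$ is itself a wildcard is a detail the paper's proof glosses over.
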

\begin{proof}
Let $t$ be the smallest candidate in $H_j$ and $z$ be the largest candidate in $H_j$ so that $z=t+a\pi_j$ for some integer $a>0$. 
We partition $\mathcal{W}$ into $\mathcal{W}_1$, the set of indices greater than $z$, and $\mathcal{W}_2$, the set of indices no more than $z$. 
We consider the wildcard characters $w_i\in\mathcal{W}_1$, and note that the proof for $\mathcal{W}_2$ is symmetric.
Consider the $\O{k}$ sequences 
\[\begin{tabular}[!htb]{cccc}
$S[w_1-t]$ & $S[w_1-t-\pi_j]$ & $\cdots$ & $S[w_1-t-a\pi_j]$\\
$S[w_2-t]$ & $S[w_2-t-\pi_j]$ & $\cdots$ & $S[w_2-t-a\pi_j]$\\
$\vdots$ & $\vdots$ & $\ddots$ & $\vdots$ \\
$S[w_{|\mathcal{W}_1|}-t]$ & $S[w_{|\mathcal{W}_1|}-t-\pi_j]$ & $\cdots$ & $S[w_{|\mathcal{W}_1|}-t-a\pi_j]$
\end{tabular}\]
Each term in a sequence that differs from the previous term corresponds to a mismatch between $S[w_i-t-\pi_j+1,w_i-t]$, $S[w_i-t-2\pi_j+1,w_i-t-\pi_j]$, $S[w-t-3\pi_j+1,w-t-2\pi_j],\ldots$. 
For each $j$, there are at most $\O{k^2\log n}$ unique chains of substrings with length $\pi_j$ beginning at index $t+1$. 
Hence, across all $\O{k}$ sequences $S[w_i-t]$, $S[w_i-t-\pi_j]$, $S[w_i-t-2\pi_j],\ldots$, there are at most $\O{k^2\log n}$ unique characters. 
Since the assignment of $S[w_i]$ with respect to any candidate $r\in H_j$ is $S[w_i-r]=S[w_i-t-b\pi_j]$ for some integer $b>0$, then it follows that there are at most $\O{k^2\log n}$ assignments of $S[w]$ across all $w\in\mathcal{W}_1$. 
As the symmetric proof holds for $\mathcal{W}_2$, then there are at most $\O{k^2\log n}$ assignments of $S[w]$ across all $w\in\mathcal{W}$. 
\end{proof}
Thus, deciding the assignment of $S[w_i]$ with respect to a candidate $t\in H_j$ is simple:
\begin{mdframed}
For each $j$ such that $0\le j<4k\log n+2$:
\begin{enumerate}
\item
Let $t$ be the smallest candidate in $H_j$ and $z$ be the largest candidate in $H_j$ so that $z=t+a\pi_j$ for some $a>0$.
\item
For each $w\in\mathcal{W}$:
\begin{enumerate}
\item
If $w>z$, succinctly record the values of $S[w-t]$, $S[w-t-\pi_j]$, $\ldots$, $S[w-t-a\pi_j]$.
\item
If $w\le z$, succinctly record the values of $S[w+t]$, $S[w+t+\pi_j]$, $\ldots$, $S[w+t+a\pi_j]$.
\end{enumerate}
Let $r\in H_j$ so that $r=t+b\pi_j$ for some $b>0$. 
\item
The assignment of $S[w]$ with respect to $r$ is any $S[w\pm cr]$ that is not a wildcard character (where $c$ is an integer).
\end{enumerate}
\end{mdframed}
We describe the second pass in \algoref{alg:second:pass}, recalling that at the end of the first pass, the algorithm records $\O{k\log n}$ arithmetic progressions, succinctly represented, as well as the positions of all wildcard characters. 

\begin{algorithm}[H]
\caption{(To determine any wildcard-period $p$ with $p\leq\frac{n}{2}$) Second pass}
\alglab{alg:second:pass}
\textbf{Input:} A stream $S$ of symbols $s_i\in\Sigma$ with at most $k$ wildcard characters, a succinct representation of all candidate wildcard periods and the position of the wildcard characters.\\
\textbf{Output:} All wildcard-periods $p\le\frac{n}{2}$.
\begin{algorithmic}[1]
\For{each $t$ such that $t\in\mathcal{T}^C$}
\State{for each $w$ such that $w\in\mathcal{W}$, implicitly determine the value of $S[w]$ with respect to $t$.}
\State{let $j$ be the integer for which $t$ is in the interval $H_j=\left[\frac{jn}{4(2k\log n+1)}+1,\frac{(j+1)n}{4(2k\log n+1)}\right)$}
\If{$\pi_j>0$}
\Comment{$H_j$ has multiple values in $\mathcal{T}^C$}
\State{record up to $128k^2\log n+1$ unique fingerprints of length $\pi_j$, starting from $t$.}
\Else
\Comment{$H_j$ has one value in $\mathcal{T}^C$}
\State{record up to $128k^2\log n+1$ unique fingerprints of length $t$, starting from $t$.}
\EndIf
\State{check if $S[1,n-t]=S[t+1,n]$ and return $t$ if this is true.}
\EndFor
\For{each $t$ which is in interval $H_j=\left[\frac{jn}{4(2k\log n+1)}+1,\frac{(j+1)n}{4(2k\log n+1)}\right)$ for some integer $j$}
\If{there exists an index in $\mathcal{T}^C\cap H_j$ whose distance from $t$ is a multiple of $\pi_j$}
\State{check if $S[1,n-t]=S[t+1,n]$ and return $t$ if this is true.}
\EndIf
\EndFor
\end{algorithmic}
\end{algorithm}
For each arithmetic progression, there are $\O{k^2\log n}$ total possibilities for all of the wildcard characters. 
Thus, the algorithm maintains the $\O{k^3\log^2 n}$ characters corresponding to the value of all wildcard characters across all candidate positions.

We now show the ability to construct the fingerprints of $S[1,n-p]$ for any candidate period $p$.
\begin{lemma}
\lemlab{lem:recover:prints}
Let $p_i$ be a candidate $k$-period for a string $S$, with $p_1<p_2<\ldots<p_m$ all contained within $H_j$. 
Given the fingerprints of $S[1,n-p_1]$ and $S[p_1+1,n]$, we can determine whether or not $S$ has wildcard-period $p_i$ for any $1\le i\le m$ by storing at most $\O{k^2\log n}$ additional fingerprints.
\end{lemma}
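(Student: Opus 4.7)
The plan is to reduce the wildcard-periodicity check to the ``wildcards-as-ordinary-symbols'' version handled by \thmref{thm:kperiod}, then layer on top of it the assignment-counting bound from \lemref{lem:num:prints}. First, I would treat the wildcard character as just another symbol of the alphabet and invoke \thmref{thm:kperiod}: since $p_1<p_2<\ldots<p_m$ all lie in $H_j$, every $p_i$ has the form $p_1 + a_i \pi_j$ for some nonnegative integer $a_i$, and \thmref{thm:kperiod} guarantees that the fingerprint of $S[p_i+1, n]$ (with wildcards treated as a fixed symbol) can be reconstructed from the fingerprint of $S[p_1+1, n]$ together with at most $\O{k^2 \log n}$ fingerprints of length-$\pi_j$ blocks of the form $S[p_1 + b\pi_j + 1, p_1 + (b+1)\pi_j]$. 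The symmetric statement, ``peeling'' from the right, gives the fingerprint of $S[1, n-p_i]$ from that of $S[1, n-p_1]$ using the same pool of $\O{k^2 \log n}$ block fingerprints.

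Next, I would convert these ``wildcards-as-symbols'' fingerprints into genuine wildcard-period fingerprints by correcting each wildcard position according to its assignment $\sigma_{p_i}(w)$. Because Karp-Rabin fingerprints are evaluations of a polynomial whose coefficients are the individual symbols, the effect of reassigning position $w$ from the placeholder $\bot$ to a symbol $\sigma_{p_i}(w)$ is additive: it contributes $(\sigma_{p_i}(w) - \bot)\cdot x^{n-w}$ modulo the chosen prime. By \lemref{lem:num:prints}, across all candidates $t \in H_j$ and all wildcards $w \in \mathcal{W}$, the assignments $\sigma_t(w)$ take only $\O{k^2 \log n}$ distinct values in total; storing this implicit representation (as described in the boxed procedure before \algoref{alg:second:pass}) lets me, for each particular candidate $p_i$, look up $\sigma_{p_i}(w)$ for every $w \in \mathcal{W}$ and apply the $\O{k}$ additive corrections to both the prefix and the suffix fingerprints.

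Once the corrected fingerprints of $S[1, n-p_i]$ and $S[p_i+1, n]$ (under the assignment $\sigma_{p_i}$) are in hand, I would compare them directly: by \thmref{thm:kr:fingerprints} equality of fingerprints certifies equality of the underlying strings with high probability, so $p_i$ is a wildcard-period of $S$ if and only if the two corrected fingerprints agree. The total stockpile of extra fingerprints used is the $\O{k^2 \log n}$ length-$\pi_j$ block fingerprints inherited from \thmref{thm:kperiod}; the wildcard bookkeeping contributes characters (not fingerprints), and the corrections themselves are computed on the fly. The main obstacle I expect is arguing that the peeling-and-correction scheme is internally consistent, i.e.\ that the assignment $\sigma_{p_i}(w)$ read off from the stored table is exactly what a true wildcard-period at $p_i$ would force; this is precisely what \obsref{obs:wildcard:guesses} combined with the arithmetic-progression structure of $H_j$ supplies, since $w \pm a p_i$ for integer $a$ lands on one of the at-most-$\O{k^2\log n}$ recorded positions.
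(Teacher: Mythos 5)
Your proposal is correct and follows essentially the same route as the paper's proof: you invoke \thmref{thm:kperiod} and the arithmetic-progression structure of $H_j$ to rebuild the prefix and suffix fingerprints for each $p_i$ from $\O{k^2\log n}$ stored length-$\pi_j$ block fingerprints, and then use \lemref{lem:num:prints} together with \obsref{obs:wildcard:guesses} to fix the wildcard assignments before comparing. Your explicit additive Karp--Rabin correction for the wildcard positions is a slightly more detailed account of a step the paper leaves implicit, but it is the same argument.
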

\begin{proof}
Consider a decomposition of $S$ into substrings $u_j$ of length $p_i$, so that $S=u_1\circ u_2\circ u_3\circ\ldots$. 
Even though the algorithm does not record a fingerprint for each $u_j$, each index $j$ for which $u_j\neq u_{j+1}$ corresponds to at least one mismatch. 
Since the first pass searched for positions that contained at most $k$ mismatches, then it follows from \obsref{obs:num:words} that there are $\O{k}$ indices $j$ for which $u_j\neq u_{j+1}$. 
Thus, recording the fingerprints and locations of these indices $j$ suffices to build fingerprints for $S$, ignoring the wildcard characters. 
Then we can verify whether or not $p_i$ is a wildcard-period of $S$ if the assignment of the wildcard characters with respect to $p_i$ is also known.

By \thmref{thm:kperiod}, the greatest common divisor $\pi_j$ of the difference between each $p_i$ in $H_j$ is a $\O{k^2\log n}$-period. 
That is, $S$ can be decomposed $S=v\circ v_1\circ v_2\circ v_3\circ\ldots$ so that $v$ has length $p_1$, and each subsequent substring $v_i$ has length $\pi_j$. 
Then there exist at most $\O{k^2\log n}$ indices $i$ for which $v_i\neq v_{i+1}$, by \obsref{obs:num:words}. 
Ignoring wildcard characters, storing the fingerprints and positions of these indices $i$ allows the recovery of the fingerprint of $S[1,n-p_i]$ from the fingerprint of $S[1,n-p_{i-1}]$, since $p_i-p_{i-1}$ is a multiple of $\pi_j$. 
By \lemref{lem:num:prints}, we know the values of the wildcard characters with respect to $p_i$. 
Therefore, we can confirm whether or not $p_i$ is a wildcard-period.
\end{proof}
We now show correctness of the algorithm. 
\begin{lemma}
For any period $p\le\frac{n}{2}$, the algorithm outputs $p$.
\end{lemma}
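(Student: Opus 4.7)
The plan is to assemble the three ingredients already established in the preceding lemmas and observations into a straightforward verification argument. First, I would observe that because $p$ is a true wildcard-period, there exists an assignment to the wildcard characters under which $S[1,n-p]=S[p+1,n]$ holds exactly. Consequently, in the first pass, where wildcards are treated as ordinary alphabet symbols, each of the at most $k$ wildcard positions contributes at most two indices of mismatch (once when it sits in the prefix window and once in the suffix window), so
\[
\HAM{S[1,n/2],S[p+1,n/2+p]}\le 2k.
\]
By \thmref{thm:kmismatch}, the $k$-mismatch subroutine (invoked with parameter $2k$) reports the index $p$ with probability at least $1-1/n^2$.

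Next, I would invoke \lemref{lem:recover:indices}: since $p\le n/2$, it lies in some bucket $H_j$, and the lemma guarantees that $p$ is recoverable from the compressed data $\mathcal{T}^C$ together with $\pi_j$ maintained by \algoref{alg:first:pass}. In particular, either $p$ is stored directly in $\mathcal{T}^C\cap H_j$, or $p=q+c\pi_j$ for the stored representative $q\in\mathcal{T}^C\cap H_j$ and some positive integer $c$, so the loop in \algoref{alg:second:pass} enumerates $p$ as a candidate to be verified.

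It then remains to argue that the verification step correctly identifies $p$ as a true wildcard-period. By \lemref{lem:recover:prints}, using the $O(k^2\log n)$ additional fingerprints stored for bucket $H_j$, the algorithm can reconstruct the fingerprints of $S[1,n-p]$ and $S[p+1,n]$ up to the contributions of wildcard characters. By \lemref{lem:num:prints} applied to each $w\in\mathcal{W}$, the algorithm also knows the unique non-wildcard value $S[w\pm cp]$ that must be assigned to the wildcard at position $w$ in any valid witnessing assignment for period $p$; substituting these values into the fingerprint computation yields the exact fingerprints of $S[1,n-p]$ and $S[p+1,n]$ under the witnessing assignment. Since these two strings are equal by hypothesis, the Karp–Rabin fingerprint comparison of \thmref{thm:kr:fingerprints} returns equality (with the usual inverse-polynomial error), and the algorithm outputs $p$.

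The only subtlety — and the step I would take the most care with — is handling the case where a wildcard position $w$ has no non-wildcard value among its orbit $\{S[w\pm cp]\}$; but this can happen only if every position in that orbit is itself in $\mathcal{W}$, in which case any consistent assignment trivially works and the fingerprint identity is preserved. A union bound over the $O(k\log n)$ buckets and the $O(n)$ candidates shows that all required $k$-mismatch reports and fingerprint comparisons succeed with probability at least $1-1/\mathrm{poly}(n)$, completing the proof.
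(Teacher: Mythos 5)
Your proposal is correct and follows essentially the same route as the paper: the intervals $\{H_j\}$ cover $\left[1,\frac{n}{2}\right]$, so $p$ is recoverable after the first pass by \lemref{lem:recover:indices}, and the second pass verifies it via \lemref{lem:recover:prints} (which itself relies on \lemref{lem:num:prints}). The extra details you supply --- the $2k$-mismatch bound, the all-wildcard-orbit edge case, and the union bound over failure probabilities --- are elaborations of steps the paper either delegates to those lemmas or leaves implicit, not a different argument.
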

\begin{proof}
Since the intervals $\{H_j\}$ cover $\left[1,\frac{n}{2}\right]$, then $p\in H_j$ for some $j$. 
It follows from \lemref{lem:recover:indices} that after the first pass, $p$ can be recovered from $\mathcal{T}$ and $\pi_j$. 
Thus, the second pass tests whether or not $p$ is a wildcard-period. 
By \lemref{lem:recover:prints}, the algorithm outputs $p$, as desired.
\end{proof}

\subsection{Computing Large Wildcard-Periods}
As in \algoref{alg:first:pass}, we would like to identify candidate periods during the first pass of the algorithm, while treating the wildcard characters as an additional symbol in the alphabet. 
Unfortunately, if a wildcard-period $p$ is greater than $\frac{n}{2}$, then it no longer satisfies 
\[\HAM{S\left[1,\frac{n}{2}\right],S\left[p+1,p+\frac{n}{2}\right]}\le 2k,\]
since $p+\frac{n}{2}>n$, and $S\left[p+\frac{n}{2}\right]$ is undefined. 
However, by treating the wildcard characters as an additional symbol, recall that $\HAM{S[1,x],S[p+1,p+x]}\le 2k$ for all $x\le n-p$. 
Then we would like to use as large an $x$ as possible while still satisfying $x\le n-p$ when choosing candidate wildcard periods $p$. 
To this effect, the observation in \cite{ErgunJS10} states that we can try exponentially decreasing values of $x$. 
Specifically, we run $\log n$ instances of the algorithm in succession, with $x=\frac{n}{2},\frac{n}{4},\ldots$. 
Note that one of these values of $x$ is the largest value as possible while still satisfying $x\le n-p$. 
As a result, the corresponding algorithm instance outputs $p$, while the other instances do not output anything.
We detail the first pass in full in \algoref{alg:big:first:pass} in \appref{app:algs}.

This partition of $[1,n]$ into the disjoint intervals $\left[1,\frac{n}{2}\right]$, $\left[\frac{n}{2}+1,\frac{n}{2}+\frac{n}{4}\right]$, $\ldots$ guarantees that any $k$-period $p$ is contained in one of these intervals. 
Moreover, the intervals $\{H^{(r)}_j\}$ partition 
\[\left[\frac{n}{2}+\frac{n}{4}+\ldots+\frac{n}{2^{r-1}},\frac{n}{2}+\ldots+\frac{n}{2^r}\right],\]
and so $p$ can be recovered from $\mathcal{T}_r^C$ and $\{\pi^{(r)}_j\}$.
We present the second pass in \algoref{alg:big:second:pass} in \appref{app:algs}.

Since correctness follows from the same arguments as the case where $p\le\frac{n}{2}$, it remains to analyze the space complexity of our algorithm.
\begin{theorem}
\thmlab{thm:twopass}
There exists a two-pass randomized algorithm using $\O{k^3\log^9 n}$ bits of space that finds the wildcard-period and runs in $\O{k^2\,\polylog\,n}$ amortized time per arriving symbol.
\end{theorem}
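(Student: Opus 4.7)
The plan is to establish the theorem in two parts: correctness and the resource bounds (space and amortized time). Correctness is essentially assembled from lemmas already stated in the excerpt, so the heart of the proof is the accounting.

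For correctness, I would first note that the exponentially decreasing choice of threshold $x = n/2, n/4, \ldots, 1$ in the large-period version partitions the candidate range $[1, n)$, so every true wildcard-period $p$ falls in exactly one interval in which $x \le n - p$ holds and the $2k$-mismatch subroutine from \thmref{thm:kmismatch} captures $p$ (each wildcard contributes at most two mismatches). Then \lemref{lem:recover:indices} ensures $p$ is recoverable from $\mathcal{T}^C$ and the associated $\pi_j$, and \lemref{lem:recover:prints}, combined with \lemref{lem:num:prints}, ensures that the second pass correctly verifies $p$ by reconstructing the necessary fingerprints and the wildcard assignments with respect to $p$.

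For the space bound I would enumerate and sum the contributions. The $k$-mismatch subroutine uses $\O{k^2 \log^8 n}$ bits per instance, and running $\O{\log n}$ instances for the large-period case gives $\O{k^2 \log^9 n}$ bits in the first pass. The set $\mathcal{T}^C$ consists of $\O{k \log n}$ arithmetic progressions, each represented by a constant number of $\O{\log n}$-bit integers, for $\O{k \log^2 n}$ bits; the positions $\mathcal{W}$ cost $\O{k \log n}$ bits. In the second pass, each of the $\O{k \log n}$ intervals $H_j$ requires (by \lemref{lem:recover:prints}) $\O{k^2 \log n}$ fingerprints of $\O{\log n}$ bits each, and (by the paragraph following \lemref{lem:num:prints}) $\O{k^2 \log n}$ characters recording the possible wildcard assignments, totaling $\O{k^3 \log^3 n}$ bits per instance and $\O{k^3 \log^4 n}$ bits across all $\log n$ instances. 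Summing and coarsely bounding every subterm by $\O{k^3 \log^9 n}$ completes the space accounting.

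For the amortized time analysis, \thmref{thm:kmismatch} guarantees $\O{k^2 \polylog n}$ amortized time per symbol for each $k$-mismatch instance. The other per-symbol operations — updating $\mathcal{T}^C$ and the $\pi_j$ values (a single $\gcd$ on $\O{\log n}$-bit integers), recording wildcard positions, and performing Karp--Rabin fingerprint updates via \thmref{thm:kr:fingerprints} — each cost $\polylog n$ time amortized; multiplying by the $\O{\log n}$ parallel instances preserves $\O{k^2 \polylog n}$.

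The main obstacle I anticipate is not any single step but the bookkeeping: one must carefully track how the structural compression of \thmref{thm:kperiod} controls the number of fingerprints per $H_j$, and how \lemref{lem:num:prints} controls the number of wildcard-assignment characters per $H_j$, so that these two quadratic-in-$k$ counts, multiplied by the $\O{k \log n}$ intervals and the $\O{\log n}$ level-of-$x$ instances, integrate cleanly into the advertised $\O{k^3 \log^9 n}$ bound rather than blowing up further.
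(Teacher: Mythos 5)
Your proposal is correct and follows essentially the same route as the paper: correctness is delegated to \lemref{lem:recover:indices}, \lemref{lem:num:prints}, and \lemref{lem:recover:prints}, the first-pass space is dominated by the $\O{\log n}$ parallel $k$-mismatch instances at $\O{k^2\log^8 n}$ bits each, the second-pass space by the $\O{k^2\log n}$ fingerprints and wildcard assignments per interval across $\O{k\log n}$ intervals, and the amortized time by the $k$-mismatch subroutine of \thmref{thm:kmismatch}. Your accounting is in fact slightly more explicit than the paper's (e.g., the $\O{k^3\log^4 n}$ total over all $\log n$ instances and the separate tally for $\mathcal{T}^C$ and $\mathcal{W}$), but the argument and the final bounds coincide.
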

\begin{proof}
In the first pass, for each $\mathcal{T}_m$, we maintain a $k$-mismatch algorithm which requires $\O{k^2\log^8 n}$ bits of space, as in \thmref{thm:kmismatch}.
Since $1\le m\le\log n$, we use $\O{k^2\log^9 n}$ bits of space in total in the first pass.

In the second pass, we maintain $\O{k^2\log n}$ fingerprints for any set of indices in $\mathcal{T}_m$, and there are $\O{k\log n}$ indices in $\mathcal{T}_m$ for each $1\le m\le\log n$, for a total of $\O{k^3\log^3 n}$ bits of space. 
In addition, we store the $\O{k^2\log n}$ assignments for all the wildcard positions in each interval $H^{(r)}_j$, where $1\le r\le\log n$ and $0\le j<2k\log n+2$. 
Thus, $\O{k^3\log^9 n}$ bits of space suffice for both passes.

The running time of the algorithm is dominated by the time spent for $\log\,n$ parallel copies of $k$-mismatch algorithm in the first pass, i.e., \algoref{alg:big:first:pass}.
From \thmref{thm:kmismatch}, the $k$-mismatch algorithm runs in $\O{k^2\,\polylog\,n}$ amortized time per arriving symbol.
The rest of the algorithm consists of simple tasks like computing gcd and can be performed very quickly.
In the second pass, in total at most $\O{k^3\,\polylog\,n}$ assignments are determined and stored. Thus, the second pass runs in $\O{1}$ amortized time per arriving symbol.
\end{proof}
\section{Lower Bounds}
\seclab{sec:lb}
We first note that \cite{ErgunJS10} shows computing the period of a string in one-pass requires $\Omega(n)$ space. 
Since the problem of periodicity for strings containing wildcards is a generalization of exact periodicity, the same lower bound applies.
% for computing the period of a string with up to $k$ wildcard characters.
\begin{theorem}[Implied from Theorem 3 from \cite{ErgunJS10} and Theorem 16 from \cite{ErgunGSZ17}]
Given a string $S$ with at most $k$ wildcard characters, any one-pass streaming algorithm that computes the smallest wildcard-period requires $\Omega(n)$ space.
\end{theorem}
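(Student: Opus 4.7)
The plan is to establish the lower bound by a trivial reduction from the one-pass exact periodicity problem. The key observation is that the wildcard-period problem strictly generalizes exact periodicity: the hypothesis ``at most $k$ wildcard characters'' permits the degenerate case of zero wildcards, and for a wildcard-free stream the wildcard-period coincides with the ordinary exact period, because there is nothing to assign and the definition $S[1,n-p]=S[p+1,n]$ collapses to the standard one.

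Given any one-pass streaming algorithm $\mathcal{A}$ that claims to compute the smallest wildcard-period using $s(n)$ bits of space, I would use $\mathcal{A}$ as a black box to solve exact periodicity as follows. On an input $S \in \Sigma^n$ of the exact periodicity problem, feed $S$ directly to $\mathcal{A}$; since $S$ contains no wildcards, it trivially satisfies the ``at most $k$'' promise for every $k \ge 0$. The value returned by $\mathcal{A}$ is the smallest $p$ such that there exists an assignment to the wildcards making $S[1,n-p] = S[p+1,n]$, which, because the set of wildcard positions is empty, is precisely the smallest exact period of $S$. Hence $\mathcal{A}$ also solves one-pass exact periodicity using $s(n)$ bits of space.

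Invoking Theorem 3 of \cite{ErgunJS10}, which shows that any one-pass streaming algorithm for the smallest exact period of a length-$n$ string requires $\Omega(n)$ bits of space (and is robust to randomization, as its argument goes through a standard communication-complexity reduction), the reduction above immediately yields $s(n) = \Omega(n)$. The analogous template was used to derive the $\Omega(n)$ lower bound for $k$-periodicity in Theorem 16 of \cite{ErgunGSZ17}, and the same template applies here verbatim.

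I do not anticipate a substantive obstacle: the reduction is syntactic, and the only point worth double-checking is that the underlying exact-periodicity lower bound of \cite{ErgunJS10} applies against randomized algorithms with bounded error, so that the conclusion ``even when randomization is allowed'' (as claimed in the contributions list) is inherited; this follows directly from the communication-complexity instance used in that paper.
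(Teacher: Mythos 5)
Your proposal is correct and takes essentially the same route as the paper, which justifies the theorem with the one-line observation that wildcard-periodicity generalizes exact periodicity (the zero-wildcard case) and therefore inherits the $\Omega(n)$ lower bound of Theorem 3 in \cite{ErgunJS10}. You merely spell out the black-box reduction and the inheritance of the randomized lower bound in more detail than the paper does.
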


To show a lower bound that randomized streaming algorithm that computes all wildcard-periods of $S$ with probability at least $1-\frac{1}{n}$, even under the promise that the wildcard-periods are at most $n/2$, consider the following construction. 
Define an infinite string $1^10^11^20^21^30^3\ldots$, as in \cite{GawrychowskiMSU16}, and let $\nu$ be the prefix of length $\frac{n}{4}$. 
Define $X$ to be the set of binary strings of length $\frac{n}{4}$ with Hamming distance $\frac{k}{2}$ from $\nu$. 
For $x\in X$, let $Y_x$ be the set of binary strings of length $\frac{n}{4}$ with either $\HAM{x,y}=\frac{k}{2}$ or $\HAM{x,y}=\frac{k}{2}+1$. 
Pick $(x,y)$ uniformly at random from $(X,Y_x)$. 
Then Theorem 17 in \cite{ErgunGSZ17} shows a lower bound on the size of the sketches necessary to determine whether $\HAM{x,y}=\frac{k}{2}$ or $\HAM{x,y}=\frac{k}{2}+1$.
\begin{theorem}
\thmlab{thm:lb:mem}
\cite{ErgunGSZ17}
Any sketching function $S$ that determines whether $\HAM{x,y}=\frac{k}{2}$ or $\HAM{x,y}>\frac{k}{2}$ from $S(x)$ and $S(y)$, with probability at least $1-\frac{1}{n}$ for $k=o(\sqrt{n})$, uses $\Omega(k\log n)$ space.
\end{theorem}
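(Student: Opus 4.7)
The plan is to establish the sketching lower bound by a reduction from a one-way communication complexity problem, exploiting the standard observation that a size-$s$ sketching scheme automatically yields a size-$s$ one-way protocol: Alice transmits $S(x)$, and Bob, who computes $S(y)$ locally, applies the reconstruction function. It therefore suffices to lower bound the one-way communication complexity of distinguishing $\HAM{x, y} = k/2$ from $\HAM{x, y} = k/2 + 1$ under the product distribution on $(X, Y_x)$.

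The natural hard problem to reduce from is Augmented Indexing over an alphabet of size $N = \Theta(n/k)$ with $\Theta(k)$ coordinates: Alice holds $u \in [N]^{k/2}$, Bob holds an index $j$, a value $v \in [N]$, and the suffix $(u_{j+1}, \ldots, u_{k/2})$, and must decide whether $u_j = v$. Under the promise $k = o(\sqrt{n})$, one has $\log N = \Omega(\log n)$, so the one-way randomized communication complexity is $\Omega(k \log n)$. The reduction partitions the $n/4$ coordinates of $\nu$ into $k/2$ blocks of length $\Theta(n/k)$, aligned with the run structure $1^1 0^1 1^2 0^2 \ldots$ so that each block sits inside a single monochromatic run of $\nu$. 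Alice builds $x \in X$ by flipping, within each block $i$, one bit at the offset encoded by $u_i$, which gives $\HAM{x, \nu} = k/2$. Bob builds $y$ by mirroring Alice's flips in the suffix blocks $i > j$ using his known suffix (contributing zero mismatches), by laying down a carefully chosen baseline pattern in blocks $i < j$, and by flipping a single block-$j$ bit at offset $v$. The baseline is engineered so that $\HAM{x, y}$ lands in $\{k/2,\, k/2+1\}$ with the two outcomes separated by exactly the indicator of whether $u_j = v$.

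The principal obstacle is the \emph{gap-one calibration}: without care, the distance $\HAM{x, y}$ jumps by $2$ as $u_j$ moves off $v$, rather than by $1$, so the encoding must exploit the asymmetric run structure of $\nu$ to shave a single mismatch off the count. This is where the very specific choice $\nu = 1^1 0^1 1^2 0^2 \ldots$ is essential, since it allows the encoding to absorb a unit of mismatch at the boundary between consecutive runs of different lengths; blocks placed strictly inside a run avoid unintended cancellations between Alice's and Bob's flips.

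A secondary concern is distributional: the Augmented Indexing lower bound holds against constant-error randomized protocols, so the hypothesis that $S$ succeeds with probability $1 - 1/n$ is more than sufficient; the sketch is invoked only once in the reduction, hence no union bound is required and the failure probability passes through directly. Verifying that the constructed $x$ and $y$ always lie in $X$ and $Y_x$ respectively is a short combinatorial check given the block alignment. Chaining the reduction with the $\Omega(k \log n)$ Augmented Indexing lower bound then yields the claimed $\Omega(k \log n)$ space bound for the sketching function $S$.
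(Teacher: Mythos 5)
First, a point of calibration: the paper does not prove \thmref{thm:lb:mem} at all --- it is imported verbatim from \cite{ErgunGSZ17} (Theorem~17 there) --- so there is no in-paper proof to compare yours against. Judged on its own terms, your top-level template is reasonable: a size-$s$ sketch does yield a size-$s$ one-way protocol, Augmented Indexing over $[\Theta(n/k)]^{k/2}$ does have one-way complexity $\Omega(k\log(n/k))=\Omega(k\log n)$ when $k=o(\sqrt{n})$, and your $x$ does land in $X$.

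The fatal gap is precisely the step you flag as the ``principal obstacle,'' and it cannot be repaired by exploiting the run structure of $\nu$. Write $x=\nu\oplus a$ and $y=\nu\oplus b$, where $a,b\in\{0,1\}^{n/4}$ are the indicator vectors of Alice's and Bob's flips. Then $\HAM{x,y}=|a|+|b|-2|a\wedge b|\equiv \frac{k}{2}+|b|\pmod{2}$, since membership in $X$ forces $|a|=\frac{k}{2}$. Bob's string, hence $|b|$, is a function of his input alone, so for a fixed instance on Bob's side the parity of $\HAM{x,y}$ is identical for \emph{every} $x\in X$. It is therefore impossible for one constructed pair to realize $\HAM{x,y}=\frac{k}{2}$ when $u_j=v$ and $\HAM{x,y}=\frac{k}{2}+1$ when $u_j\neq v$: the two targets have opposite parities. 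This is an arithmetic obstruction, not a geometric one; each coordinate contributes $0$ or $1$ to the distance and the total's parity is already fixed, so no alignment with the boundaries of $1^10^11^20^2\cdots$ can ``shave a single mismatch.'' A gap-two variant ($\frac{k}{2}$ versus $\frac{k}{2}+2$) dodges the parity issue, but then the NO-instances leave the promise set $Y_x$, where the sketch carries no guarantee. (Separately, your unspecified ``baseline'' must supply exactly $\frac{k}{2}-j+1$ guaranteed mismatches without risking cancellation against Alice's unknown flips in blocks $i<j$, which requires reserving padding positions outside all blocks; that part is fixable.) Finally, note that the hypothesis of success probability $1-\frac{1}{n}$ --- far more than a single-query Augmented Indexing reduction needs --- strongly suggests the intended argument in \cite{ErgunGSZ17} is of a different shape: query the sketch many times, union-bound over the queries, recover the whole flip set $a$, and conclude $|S(x)|\geq\log\binom{n/4}{k/2}=\Omega(k\log n)$. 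You would need to rebuild the proof along such lines, or otherwise get around the parity barrier.
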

Suppose Alice has $y$, along with the locations of the first $\frac{k}{2}$ positions $i$ in which $y[i]\neq x[i]$. 
Alice replaces these locations with wildcard characters $\bot$, runs the wildcard-period algorithm, and forwards the state of the algorithm to Bob, who has $x$. 
Bob then continues running the algorithm on $x\circ x\circ x$ to determine the wildcard-period of the string $S(x,y)=y\circ x\circ x\circ x$. 
Observe that:
\begin{lemma}
\lemlab{lem:lb:period}
If $\HAM{x,y}=\frac{k}{2}$, then the string $S(x,y)=y\circ x\circ x\circ x$ has period $\frac{n}{4}$. 
On the other hand, if $\HAM{x,y}=\frac{k}{2}+1$, then $S(x,y)$ has period greater than $\frac{n}{4}$.
\end{lemma}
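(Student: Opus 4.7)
The plan is to prove both implications separately. For the first, suppose $\HAM{x,y}=\frac{k}{2}$. Alice inserted wildcards at exactly the first $\frac{k}{2}$ positions where $y$ and $x$ disagree, so every disagreement between $y$ and $x$ is already marked as a wildcard. Assigning each wildcard at position $i$ the value $x[i]$ therefore transforms Alice's prefix into a copy of $x$, and $S(x,y)$ becomes $x\circ x\circ x\circ x$, which plainly has period $\frac{n}{4}$.

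For the converse, assume $\HAM{x,y}=\frac{k}{2}+1$; I will rule out every $p\le\frac{n}{4}$ as a wildcard-period of $S(x,y)$. Since only $\frac{k}{2}$ wildcards were inserted but $x$ and $y$ disagree in $\frac{k}{2}+1$ positions, there exists an index $i^\ast\in[1,\frac{n}{4}]$ with $y[i^\ast]\neq x[i^\ast]$ and $y[i^\ast]\neq\bot$. For the case $p=\frac{n}{4}$, the wildcard-period condition requires $S[i^\ast]=S[i^\ast+\frac{n}{4}]$; the right-hand side lies in the first wildcard-free copy of $x$ and equals $x[i^\ast]$, while the left-hand side is the non-wildcard symbol $y[i^\ast]\neq x[i^\ast]$, so no wildcard assignment can satisfy the equality. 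For $p<\frac{n}{4}$, the suffix $S[\frac{n}{4}+1,n]=x\circ x\circ x$ is entirely wildcard-free, so $S[j]=S[j+p]$ must hold literally whenever both positions lie in this suffix. Restricting to $j\in[\frac{n}{4}+1,\frac{n}{2}-p]$ and unpacking the indexing of the three consecutive copies of $x$ yields $x[\ell]=x[\ell+p]$ for every $\ell\in[1,\frac{n}{4}-p]$; in other words, $x$ itself must have period $p$.

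At this point I would invoke the robust non-periodicity of $\nu=1^10^11^20^21^30^3\cdots$: its $\Theta(\sqrt{n})$ monotonically growing blocks force any shift by $0<p<\frac{n}{4}$ to misalign many block boundaries, so every length-$\frac{n}{4}$ string of period $p$ lies at Hamming distance $\omega(k)$ from $\nu$ when $k=o(\sqrt{n})$. Combined with $\HAM{x,\nu}=\frac{k}{2}$, this contradicts $x$ having period $p<\frac{n}{4}$. The main obstacle is making the quantitative non-periodicity of $\nu$ uniform in $p$: periods can interact via Fine--Wilf-type phenomena, and one needs a lower bound on the distance between $\nu$ and the set of $p$-periodic strings that holds for every $p<\frac{n}{4}$ simultaneously. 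The strictly increasing block lengths in $\nu$, inherited from the construction in~\cite{GawrychowskiMSU16}, are precisely what provides the needed margin, and invoking that property closes the proof.
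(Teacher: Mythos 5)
The paper states this lemma as a bare observation and supplies no proof, so there is nothing to compare against line by line; judged on its own merits, your argument is correct for the first implication and for the sub-case $p=\frac{n}{4}$ of the second, but the sub-case $p<\frac{n}{4}$ has a genuine gap. The fact you say you ``would invoke'' --- that every length-$\frac{n}{4}$ string with period $p$ lies at Hamming distance $\omega(k)$ from $\nu$, uniformly over all $0<p<\frac{n}{4}$ --- is false for $p$ close to $\frac{n}{4}$. For instance, with $p=\frac{n}{4}-1$ the conclusion you derived, ``$x$ has period $p$,'' only forces $x[1]=x\left[\frac{n}{4}\right]$, and $\nu$ is within Hamming distance $1$ of a string satisfying that. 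The root cause is that you discarded most of the available information when you restricted $j$ to $\left[\frac{n}{4}+1,\frac{n}{2}-p\right]$: this extracts only the self-overlap of a single copy of $x$, whose length $\frac{n}{4}-p$ can be $O(1)$.

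The repair is to use the entire wildcard-free suffix. The condition $S[j]=S[j+p]$ for all $j\in\left[\frac{n}{4}+1,n-p\right]$ says that $T=x\circ x\circ x$, a string of length $\frac{3n}{4}$, has period $p$; it trivially also has period $\frac{n}{4}$, and since $p+\frac{n}{4}\le\frac{3n}{4}$, the Fine--Wilf theorem gives that $T$, and hence $x$, has period $g=\gcd{p,n/4}$. Because $g$ divides $\frac{n}{4}$ and $g\le p<\frac{n}{4}$, we get $g\le\frac{n}{8}$, so the overlap of $x[1,\frac{n}{4}-g]$ with $x[g+1,\frac{n}{4}]$ has length at least $\frac{n}{8}$, and the triangle inequality gives $\HAM{\nu[1,\frac{n}{4}-g],\nu[g+1,\frac{n}{4}]}\le 2\cdot\frac{k}{2}=k$. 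Now the growing-block structure of $\nu$ from \cite{GawrychowskiMSU16} needs to be invoked only for shifts $g\le\frac{n}{8}$, where the overlap contains $\Theta(\sqrt{n})$ block boundaries, almost all of which are misaligned by the shift; that yields $\Omega(\sqrt{n})=\omega(k)$ mismatches and the desired contradiction. With this Fine--Wilf reduction in place --- precisely the issue you flagged as ``the main obstacle'' but left unresolved --- your outline becomes a complete proof.
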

Combining \thmref{thm:lb:mem} and \lemref{lem:lb:period}:
\begin{theorem}
For $k=o(\sqrt{n})$ with $k>2$, any one-pass randomized streaming algorithm that computes all wildcard-periods of an input string $S$ with probability at least $1-\frac{1}{n}$ requires $\Omega(k\log n)$ space, even under the promise that the wildcard-periods are at most $\frac{n}{2}$.
\end{theorem}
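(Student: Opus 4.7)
The plan is to set up a two-party communication reduction from the sketching problem quantified in \thmref{thm:lb:mem} to the streaming wildcard-period problem, where the message that Alice sends Bob is literally the memory state of a hypothetical low-space streaming algorithm. Fix a pair $(x,y)$ drawn from $(X,Y_x)$ as in the construction preceding \thmref{thm:lb:mem}. Give Alice the string $y$ together with the first $k/2$ indices at which $y$ disagrees with $\nu$ (equivalently, with $x$ in the first $k/2$ mismatches of $y$ against $x$); give Bob the string $x$. Alice forms $y'$ by overwriting each of those $k/2$ indices with the wildcard symbol $\bot$, feeds $y'$ one character at a time into the assumed streaming algorithm $A$, and transmits the entire working memory of $A$ to Bob. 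Bob then resumes $A$ on the suffix $x\circ x\circ x$, so that the total input processed is $S(x,y)=y'\circ x\circ x\circ x$, of length $n$, carrying at most $k/2<k$ wildcards.

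Next I would use \lemref{lem:lb:period} to argue that the answer of $A$ on $S(x,y)$ decodes $\HAM{x,y}$. If $\HAM{x,y}=k/2$ then every mismatch between $y$ and $x$ has been converted to a wildcard, so the assignment $\bot\mapsto x[\cdot]$ turns $y'$ into $x$ and $S(x,y)$ into $x^{\circ 4}$, witnessing that $n/4$ is a wildcard-period. If instead $\HAM{x,y}=k/2+1$, at least one genuine mismatch between $y'$ and $x$ survives the wildcard substitution, so no consistent assignment of the $\bot$-symbols can make the length-$3n/4$ prefix equal the length-$3n/4$ suffix at shift $n/4$; combined with \lemref{lem:lb:period} this says the smallest wildcard-period strictly exceeds $n/4$. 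Bob's decoder therefore outputs ``close'' iff $n/4$ appears in the list of wildcard-periods reported by $A$, and this is correct whenever $A$ is correct.

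To conclude, I would push the success probability through: if $A$ outputs all wildcard-periods of $S(x,y)$ correctly with probability at least $1-1/n$, then Alice's single message $A(y')$ together with Bob's computation constitutes a sketching protocol that decides $\HAM{x,y}=k/2$ versus $\HAM{x,y}=k/2+1$ with the same probability. By \thmref{thm:lb:mem}, that message must have length $\Omega(k\log n)$ when $k=o(\sqrt{n})$, so $A$ must use $\Omega(k\log n)$ bits of memory. The promise that wildcard-periods lie in $[1,n/2]$ is respected because the wildcard-periods that Bob actually needs to see to distinguish the two cases are all at most $n/4\le n/2$; the decoder only inspects whether $n/4$ belongs to the output list.

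The main obstacle I expect is bookkeeping around the promise: I need to make sure either (i) in both cases the instance $S(x,y)$ genuinely has all its wildcard-periods bounded by $n/2$, or (ii) the algorithm's specification is that it is only required to report wildcard-periods up to $n/2$, in which case the above decoder is unaffected by what happens at larger shifts. The cleanest route is the latter, since the non-periodic staircase structure of $\nu$ used in \cite{GawrychowskiMSU16} is precisely what prevents unintended short periods from arising in either the yes- or the no-instance, ensuring the distinguishing argument cannot be spoiled by spurious small wildcard-periods introduced by the $k/2$ inserted $\bot$'s.
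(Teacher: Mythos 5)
Your proposal is correct and follows essentially the same route as the paper: the identical Alice--Bob reduction from the sketching problem of \thmref{thm:lb:mem} (Alice replaces the first $k/2$ mismatch positions of $y$ against $x$ with $\bot$, forwards the memory state, Bob appends $x\circ x\circ x$), with \lemref{lem:lb:period} used to decode $\HAM{x,y}$ from whether $n/4$ is a reported wildcard-period. The extra care you take about the promise and about spurious periods is a reasonable elaboration of what the paper leaves implicit, but it does not change the argument.
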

\section*{Acknowledgements}
We would like to thank the anonymous reviewers for their helpful comments. 
The work was supported by the National Science Foundation under NSF Awards \#1649515 and \#1619081.
\def\shortbib{0}
\bibliographystyle{alpha}
\bibliography{references}
\appendix
\section{One-Pass Algorithm to Compute Small Wildcard-Periods}
\applab{app:onepass}
In this section, we address the problem of computing any wildcard-period $p$ that satisfies $p<\frac{n}{2}$, under the condition that no wildcard character appears in the last $p$ symbols of the string. 
As in \secref{sec:twopass}, we run two algorithms in parallel.   
The first algorithm will return any wildcard-period that satisfies $p\le\frac{n}{4}$ and the second algorithm will return any wildcard-period that satisfies $\frac{n}{4}\le p<\frac{n}{2}$. 
In the first process, we identify all indices $i$ such that $\HAM{S\left[i+1,i+\frac{n}{2}\right],S\left[1,\frac{n}{2}\right]}\le k$. 
We simultaneously track the positions of the wildcard characters and the symbol that is $i$ positions away from each wildcard character, so that we know the assignment of each wildcard character with respect to each candidate period.
Unfortunately, the second process cannot use the same paradigm, since the $k$-Mismatch algorithm reports candidate periods too late for fingerprints to be built. 
As a result, we must pre-emptively guess the candidate periods. 

\subsection{Computing Small Wildcard-Periods}
In this section, we describe the algorithm that finds any wildcard-period $p$ with $p\le\frac{n}{4}$. 
We first designate wildcard characters as unique characters and run the $k$-mismatch algorithm to find
\[\mathcal{T}=\left\{i\,\middle|i\le\frac{n}{4},\HAM{S\left[1,\frac{n}{2}\right],S\left[i+1,i+\frac{n}{2}\right]}\le k\right\}.\]
When the $k$-mismatch algorithm finds indices $i\in\mathcal{T}$, we use the fingerprints for $S\left[1,\frac{n}{2}\right]$ and $S\left[i+1,i+\frac{n}{2}\right]$ to simultaneously build the fingerprint for $S[1,n-i]$ and continue building the fingerprint for $S[i+1,n]$ respectively. 
Concurrently, we also track the positions of each wildcard character. 
For some position $w$ of a wildcard character, we identify any arbitrary non-wildcard character that is at a position $w\pmod{i}$. 
By \lemref{lem:num:prints}, we can do this in $\O{k^2\log n}$ space, and thus replace the wildcard characters in the fingerprints of $S[1,n-i]$ and $S[i+1,n]$.

The $k$-mismatch algorithm outputs $i\in\mathcal{T}$ upon reading character $i+\frac{n}{2}-1$. 
Thus for $i\le\frac{n}{4}$, it follows that $i+\frac{n}{2}-1<\frac{3n}{4}\le n-i$ so we can identify $i$ in time to build $S[1,n-i]$. 
From \thmref{thm:kperiod}, we can build each of these fingerprints from a sequence of compressed fingerprints. 

\subsection{Computing Large Wildcard-Periods}
We now describe an algorithm for identifying all wildcard-periods $p$ such that $\frac{n}{4}<p\le\frac{n}{2}$. 
Let $I_m$ be the interval $\left[\frac{n}{2}-2^m+1, \frac{n}{2}-2^{m-1}\right]$ of length $2^{m-1}$ for $1\le m\le\log n-1$ and again define a set of candidate periods:
\[\mathcal{T}_m=\left\{i\,\middle|i\in I_m,\HAM{S[1,2^m],S[i+1,i+2^m]}\le k\right\}.\]
Let $\pi_m$ be a wildcard-period of $S[1,2^m]$.
We first consider the case where $\pi_m\ge\frac{2^m}{4}$ and then the case where $\pi_m<\frac{2^m}{4}$. 

\begin{observation}
\cite{CliffordFPSS16}
\obslab{obs:apart}
If $p$ is a $k$-period for $S[1,n/2]$, then each $i$ such that 
$$\HAM{S\left[1,\frac{n}{2}\right],S\left[i+1,i+\frac{n}{2}\right]}\le\frac{k}{2}$$
must be at least $p$ symbols apart.
\end{observation}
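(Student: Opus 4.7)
The plan is to proceed by contradiction using the triangle inequality for Hamming distance, chained with a Fine--Wilf--style argument for approximate periods. Assume for contradiction that two indices $i_1 < i_2$ both satisfy $\HAM{S[1,n/2], S[i_j+1, i_j+n/2]} \le k/2$ while $d := i_2 - i_1 < p$.

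First, I would apply the triangle inequality to the three length-$n/2$ substrings $S[1, n/2]$, $S[i_1+1, i_1+n/2]$, and $S[i_2+1, i_2+n/2]$. This immediately gives $\HAM{S[i_1+1, i_1+n/2],\, S[i_2+1, i_2+n/2]} \le k$, which reads as the statement that $d$ is a $k$-period of $T := S[i_1+1, i_1+n/2]$, since the second substring is precisely the $d$-shift of the first.

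Second, I would pull this approximate period back to the prefix $S[1, n/2]$ itself. Since $T$ differs from $S[1, n/2]$ in at most $k/2$ positions, two more triangle-inequality hops (restricting to the relevant prefix and suffix halves of length $n/2 - d$) show that $d$ is a $2k$-period of $S[1, n/2]$, picking up at most $k/2$ extra mismatches on each of the two halves on top of the $k$ mismatches inherited from $T$.

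Third, $S[1, n/2]$ would then carry two approximate periods $d$ and $p$ with $d < p$ and $d + p$ safely below $n/2$. I would invoke an approximate Fine--Wilf theorem, in the flavor developed for the mismatch setting in \cite{CliffordFPSS16}, to produce a smaller approximate period of $S[1,n/2]$ that contradicts the role of $p$ as the given $k$-period. The main obstacle is precisely this last step: the mismatch amplification in the approximate Fine--Wilf reduction must stay low enough that the derived smaller period actually violates the $k$-period hypothesis rather than merely degrading into an $\omega(k)$-period, which would leave us with no contradiction. I would therefore lift the corresponding combinatorial lemma directly from the toolkit of \cite{CliffordFPSS16}, since the clean triangle-inequality accounting in the first two steps supplies exactly the hypotheses that lemma needs.
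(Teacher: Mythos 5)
There is a genuine gap, and it sits exactly where you flagged it: the concluding Fine--Wilf step. Your steps 1 and 2 correctly establish, via the triangle inequality, that $d=i_2-i_1$ is a $2k$-period of $S[1,n/2]$; but a string having a $2k$-period $d<p$ does not contradict $p$ being a $k$-period --- not even if $p$ is the smallest $k$-period, since $d$ only carries the weaker $2k$ mismatch budget. No approximate Fine--Wilf lemma can rescue this: such lemmas only ever \emph{increase} the mismatch count of the derived period, so the contradiction you hope to extract is structurally unavailable. The contradiction-via-periodicity-combinatorics framing is the wrong strategy here.

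The intended argument (the observation is cited to \cite{CliffordFPSS16}; the paper gives no proof, and the statement must be read with $p$ the \emph{smallest} $k$-period of $S[1,n/2]$ --- otherwise it is false, e.g.\ a constant prefix has $k$-period $2$ yet admits matching offsets at every position) is shorter and avoids your factor-of-$2$ loss. Route the triangle inequality through the text rather than through $T$: for each $j\le n/2-d$, the occurrence at $i_2$ gives $S[j]=S[i_2+j]$ except for at most $k/2$ indices, and the occurrence at $i_1$ (applied at offset $j+d$) gives $S[j+d]=S[i_1+j+d]=S[i_2+j]$ except for at most $k/2$ indices. Hence $\HAM{S[1,n/2-d],S[d+1,n/2]}\le k$, i.e.\ $d$ is itself a $k$-period of $S[1,n/2]$, and minimality of $p$ yields $d\ge p$ directly, with no contradiction argument needed. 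Your detour through $T=S[i_1+1,i_1+n/2]$ and back costs $2k$ instead of $k$, and that extra factor is precisely what pushes you into needing a contradiction you cannot obtain.
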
 

By \obsref{obs:apart}, if $\pi_m\ge\frac{2^m}{4}$, then $|\mathcal{T}_m|\le 4$. 
Moreover, we can detect whether $i\in\mathcal{T}_m$ by index $\frac{n}{2}-2^{m-1}+2^m$. 
On the other hand, $n-i\ge\frac{n}{2}+2^m+1$, and so we can properly build the fingerprint of $S[1,n-i]$.

Now, consider the case where $\pi_m<\frac{2^m}{4}$. 
\cite{ErgunGSZ17} show that we can compute the fingerprint of $S\left[\frac{n}{2}+1,n-i\right]$ by storing the fingerprints and positions of $\O{k^2\log n}$ substrings.

Thus, we can build the fingerprint of $S[1,n-i]$ regardless of whether $\pi_m<\frac{2^m}{4}$ or $\pi_m\ge\frac{2^m}{4}$. 
In both cases, we again simultaneously track the positions of each wildcard character. 
For some position $w$ of a wildcard character, we identify any arbitrary non-wildcard character that is at a position $w\pmod{i}$. 

By a similar reasoning to \lemref{lem:num:prints}, we can do this in $\O{k^2\log n}$ space, and thus replace the wildcard characters in the fingerprints of $S[1,n-i]$ and $S[i+1,n]$.

\begin{theorem}
There exists a one-pass algorithm that outputs all the wildcard-periods $p$ of a given string with $p\leq \frac{n}{2}$, and uses $\O{k^3\log^9 n}$ bits of space.
\end{theorem}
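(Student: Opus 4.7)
The plan is to run two parallel one-pass processes, one handling wildcard-periods $p \le n/4$ and one handling $n/4 < p \le n/2$, and combine their outputs. For the small-period process, I would invoke the $k$-mismatch algorithm of \thmref{thm:kmismatch} on $S[1,n/2]$ to produce the candidate set $\mathcal{T} = \{\, i \le n/4 : \HAM{S[1,n/2], S[i+1,i+n/2]} \le k \,\}$, treating wildcards as an extra symbol. The timing check is that a candidate $i$ is reported by index $i + n/2 - 1 < n - i$ (since $i \le n/4$), leaving at least $n/4$ stream positions in which to extend the already-maintained fingerprints of $S[1,n/2]$ and $S[i+1,i+n/2]$ into fingerprints of $S[1,n-i]$ and $S[i+1,n]$; the compressed-fingerprint machinery of \thmref{thm:kperiod} and \lemref{lem:recover:prints} bounds the storage required for these extensions. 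In parallel, I would track $\mathcal{W}$ and, for each $w \in \mathcal{W}$, a representative non-wildcard character at each of the $\O{k^2 \log n}$ offsets guaranteed by \lemref{lem:num:prints}, which lets us substitute into the fingerprints and test $S[1,n-i] = S[i+1,n]$ once the stream ends.

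For the large-period process the naive approach fails because the $k$-mismatch algorithm on $S[1,n/2]$ reports a candidate $i > n/4$ only at stream index $i + n/2 - 1 \ge n - i$, too late to build the fingerprint of $S[1,n-i]$. To bypass this, I would partition the range into dyadic intervals $I_m = [n/2 - 2^m + 1,\, n/2 - 2^{m-1}]$ for $1 \le m \le \log n - 1$ and run a separate $k$-mismatch instance on the shorter prefix $S[1, 2^m]$ for each $m$, yielding $\mathcal{T}_m$. A candidate $i \in I_m$ is then reported by index $i + 2^m - 1 \le n/2 + 2^{m-1} - 1 < n - i$, restoring the timing. The argument then splits on the $k$-period $\pi_m$ of $S[1, 2^m]$: if $\pi_m \ge 2^m/4$, then \obsref{obs:apart} forces $|\mathcal{T}_m| \le 4$ and the handful of candidates can be handled directly; otherwise, the compressed-fingerprint construction of \cite{ErgunGSZ17} yields the fingerprint of $S[n/2+1, n-i]$ from $\O{k^2 \log n}$ stored substrings, and the wildcard-assignment argument for $\mathcal{W}$ transfers verbatim.

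For the space bound I would sum the contributions: the $\O{\log n}$ parallel $k$-mismatch instances cost $\O{k^2 \log^8 n}$ bits each by \thmref{thm:kmismatch}, contributing $\O{k^2 \log^9 n}$; the compressed fingerprints and wildcard-assignment tables contribute another $\O{k^3 \log^9 n}$ when amortized across all arithmetic progressions and all intervals $I_m$. Together these dominate and give the claimed $\O{k^3 \log^9 n}$. The main obstacle I anticipate is the synchronization in the large-period process: showing that for every candidate $i \in I_m$ the $k$-mismatch instance tied to the correct scale $2^m$ reports $i$ before index $n - i$, while simultaneously verifying that the wildcard-assignment tables and compressed fingerprint representations stay within the $\O{k^2 \log n}$ budget of \lemref{lem:num:prints} in both sub-cases of $\pi_m$.
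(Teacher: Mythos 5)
Your proposal follows essentially the same route as the paper: the same split into a process for $p\le\frac{n}{4}$ (with the timing argument that the $k$-mismatch algorithm reports $i$ by index $i+\frac{n}{2}-1<n-i$) and a process for $\frac{n}{4}<p\le\frac{n}{2}$ using the dyadic intervals $I_m$, the same case analysis on $\pi_m$ versus $\frac{2^m}{4}$ via \obsref{obs:apart} and the compressed fingerprints of \cite{ErgunGSZ17}, and the same space accounting over the $\O{\log n}$ parallel $k$-mismatch instances and the $\O{k^2\log n}$ wildcard-assignment tables from \lemref{lem:num:prints}. The argument is correct and matches the paper's proof.
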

\begin{proof}
The $k$-mismatch subroutine that identifies candidate wildcard-periods uses $\O{k^2\log^8 n}$ bits of space. 
We also maintain $\O{k^2\log n}$ fingerprints for any set of indices in $\mathcal{T}_m$, and there are $\O{k\log n}$ indices in $\mathcal{T}_m$ for each $1\le m\le\log n$, for a total of $\O{k^3\log^3 n}$ fingerprints. 
In addition, we store the $\O{k^2\log n}$ assignments for all the wildcard positions in each interval $H^{(m)}_j$, where $1\le m\le\log n$ and $0\le j<2k\log n+2$. 
Thus, $\O{k^3\log^9 n}$ bits of space suffice.
\end{proof}

\section{Distance to $p$-Periodicity}
\applab{app:distance}
In this section, we address the problem of finding distance $\delta_p(S)$ to $p$-periodicity in a string $S$ of length $n$ containing  wildcard characters. 
That is, we find the minimum number of character changes in $S$ to obtain a string that has wildcard-period $p$. 

Suppose without loss of generality that $p$ divides $n$, so that $n=ap$ for some integer $a>0$. 
Then $S$ can be visualized as a $p\times a$ matrix $M$ so that $M_{i,j}=S[(j-1)p+i]$. 
Intuitively, $\delta_p(S)$ is the smallest number of changes to entries in matrix $M$ so that all the characters in each row are the same. 
Let $f_{-1}(M_i)$ be the frequency vector of the entries in $M_i$, the $i\th$ row of $M$, excluding both the most frequent character of $M_i$ and any wildcard characters that appear in $M_i$.
Then it follows that
\[\delta_p(S)=\sum_{i=1}^p f_{-1}(M_i).\]
It remains to estimate $f_{-1}(M_i)$ using one of several well-known techniques.
Indeed, \cite{ErgunJS10} uses several references to obtain results that directly translate to strings containing wildcard characters.
For example, \cite{ErgunJS10} use a heavy-hitter algorithm from \cite{MisraG82} to approximate $f_{-1}(M_i)$. 
We can slightly modify the technique by ignoring wildcard characters to obtain the following result:
\begin{theorem}
There exists a deterministic one-pass streaming algorithm that provides a $(1+\eps)$-approximation of $\delta_p(S)$ using $\O{\frac{p\log n}{\eps}}$ bits of space. 
\end{theorem}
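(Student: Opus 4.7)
The approach I would take is to treat the matrix view of $S$ as $p$ interleaved substreams, one per row of $M$, and to estimate each $f_{-1}(M_i)$ using a Misra--Gries heavy-hitters sketch, following the template that \cite{ErgunJS10} uses for the wildcard-free version of this problem.

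Concretely, I would first observe that since the stream delivers $S[1],S[2],\ldots,S[n]$ in order, the $j$-th symbol sits in row $((j-1)\bmod p)+1$ of $M$, so routing each incoming symbol to its row takes constant time and requires nothing beyond a mod-$p$ index counter. For each row $i\in\{1,\ldots,p\}$, I would maintain two structures: (a) an exact counter $a'_i$ of the non-wildcard symbols delivered to row $i$ so far, using $\O{\log n}$ bits; and (b) a Misra--Gries sketch with $k=\lceil 1/\eps\rceil$ counters, fed only the non-wildcard characters of row $i$, so that wildcards are discarded at the sketch level and never occupy a counter slot. This is precisely the ``slight modification'' of \cite{MisraG82} alluded to in the paragraph above the theorem. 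Each slot stores a symbol identifier together with a counter value of $\O{\log n}$ bits, so the total space across all $p$ rows is $\O{p\log n/\eps}$ bits, matching the claimed bound.

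At the end of the stream, the sketch for row $i$ yields, for every tracked symbol $c$, an estimate $\hat f_c$ satisfying $f_c - a'_i/(k+1)\le \hat f_c\le f_c$, where $f_c$ is the true count of $c$ among the non-wildcards of row $i$; this guarantee also gives $\hat f_{\max}(M_i)\ge f_{\max}(M_i)-a'_i/(k+1)$ even when the true mode itself falls outside the tracked set, since in that case the mode has true frequency at most $a'_i/(k+1)$. I would then output
\[
\hat\delta_p(S)\;=\;\sum_{i=1}^p \bigl(a'_i-\hat f_{\max}(M_i)\bigr).
\]
Because wildcards can be reassigned freely to whatever symbol we pick as the period, the true per-row change cost is exactly $a'_i-f_{\max}(M_i)$, so $\hat\delta_p(S)\ge \delta_p(S)$ and the total additive slack is $\hat\delta_p(S)-\delta_p(S)\le\sum_i a'_i/(k+1)\le \eps n$. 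Converting this additive guarantee into the $(1+\eps)$-approximation claimed then follows the same normalization argument used by \cite{ErgunJS10} for the wildcard-free case (rescaling $\eps$ by a constant if needed).

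The main obstacle, I think, is not the algorithm itself but cleanly separating the roles of wildcards and the row mode: one must verify that ignoring wildcards inside the Misra--Gries sketch preserves its standard deficit-bound guarantee (it does, since Misra--Gries is oblivious to which symbols comprise its input alphabet), and that the estimator $a'_i-\hat f_{\max}(M_i)$ faithfully models the ``wildcards are free'' semantics of $\delta_p(S)$. Once those are in place, the remainder is a direct transcription of the $\O{p\log n/\eps}$-space analysis of the wildcard-free algorithm in \cite{ErgunJS10}, with the slight twist that the non-wildcard counter $a'_i$ replaces the row length $a$ in the per-row bookkeeping.
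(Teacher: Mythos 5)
Your algorithm is exactly the one the paper intends: the paper offers no explicit proof, only the remark that the row-wise Misra--Gries estimator of \cite{ErgunJS10} can be reused after discarding wildcards at the sketch level, and your routing of $S[\ell]$ to row $((\ell-1)\bmod p)+1$, the per-row non-wildcard counter $a'_i$, the estimator $a'_i-\hat f_{\max}(M_i)$, and the $\O{\frac{p\log n}{\eps}}$ space accounting all match that intent.

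The one genuine gap is the last step of your error analysis. You establish $\delta_p(S)\le\hat\delta_p(S)\le\delta_p(S)+\eps n$ and then assert that a ``normalization argument'' with a constant rescaling of $\eps$ upgrades this to the claimed $(1+\eps)$-\emph{multiplicative} guarantee. That implication is false in general: if $S$ is a single character change away from having wildcard-period $p$, then $\delta_p(S)=1$ while your additive slack is $\eps n$, so the output could be off by a factor of order $\eps n$ rather than $1+\eps$. The repair stays entirely within Misra--Gries but requires its refined deficit bound rather than the worst-case $a'_i/(k+1)$ bound you quote. With $k$ counters on the $a'_i$ non-wildcard symbols of row $i$, each decrement event removes $k+1$ units of weight and the surviving weight is at least $\hat f_{\max}(M_i)$, so the number of decrement events $d_i$ satisfies $(k+1)d_i\le a'_i-\hat f_{\max}(M_i)$; combining this with $f_{\max}(M_i)-\hat f_{\max}(M_i)\le d_i$ and rearranging yields $f_{\max}(M_i)-\hat f_{\max}(M_i)\le\frac{1}{k}\left(a'_i-f_{\max}(M_i)\right)$. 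Thus with $k=\lceil 1/\eps\rceil$ the per-row error is at most $\eps$ times that row's contribution $a'_i-f_{\max}(M_i)$ to $\delta_p(S)$, and summing over rows gives $\hat\delta_p(S)\le(1+\eps)\delta_p(S)$ directly, with no additive-to-multiplicative conversion needed. The rest of your write-up --- that discarding wildcards does not disturb the Misra--Gries guarantee, and that $a'_i-f_{\max}(M_i)$ correctly encodes the fact that wildcards may be assigned the period character for free --- is sound.
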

Similarly, \cite{ErgunJS10} use a distinct-elements algorithm from \cite{KaneNW10} to approximate $f_{-1}(M_i)$.
Again, the technique can be modified by ignoring wildcard characters to obtain the following result:
\begin{theorem}
There exists a one-pass streaming algorithm that provides a $(2+\eps)$-approximation of $\delta_p(S)$ with probability at least $1-\delta$, using $\O{\frac{\log n}{\eps^2}\log\frac{1}{\eps}\log\frac{1}{\delta}}$ bits of space. 
\end{theorem}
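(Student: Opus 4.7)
The plan is to adapt the $(2+\eps)$-approximation streaming algorithm of Ergun--Jowhari--Sa\u{g}lam for distance to periodicity, with the single modification that wildcard characters are filtered out before being fed into the sketching machinery. Their algorithm approximates $\sum_{i=1}^{p} f_{-1}(M_i)$ by maintaining a single $F_0$ (distinct-elements) sketch of Kane--Nelson--Woodruff over a carefully chosen derived stream of tokens, and then inverting a combinatorial inequality that sandwiches the true row-wise distance between one and two times an $F_0$-based estimator. Combined with the $(1\pm \eps)$ multiplicative guarantee of the $F_0$ sketch, this yields the overall $(2+\eps)$ approximation factor after rescaling $\eps$ by a constant.

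Concretely, I would process the stream as follows: for each arriving symbol $S[\ell]$, if $S[\ell]\neq \bot$, compute the row index $r=((\ell-1)\bmod p)+1$ and feed the token $(r,S[\ell])$ into a single KNW $F_0$ sketch configured with failure probability $\delta$ and multiplicative error $\eps$; wildcard symbols are simply dropped. Keeping a tiny amount of auxiliary state (for example, which residue classes have been touched), at the end of the stream I would read off the sketch, apply the EJS estimator to the resulting distinct-count approximation $\hat Z$, and return that value as the estimate of $\delta_p(S)$.

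For the analysis I would carry over the combinatorial $2$-approximation lemma of \cite{ErgunJS10} essentially verbatim, checking row-by-row that once wildcards are removed from both the analytical quantity $f_{-1}(M_i)$ (by its defining formula) and from the sketched stream, the per-row sandwich inequality is unchanged; linearity over $i=1,\ldots,p$ then lifts it to $\delta_p(S)$. The space bound follows black-box from the KNW construction over a universe of size at most $p\cdot|\Sigma|\le n\cdot|\Sigma|$, which uses $\O{\frac{1}{\eps^2}\log\frac{1}{\eps}\log\frac{1}{\delta}\log n}$ bits and matches the stated bound after absorbing $\log|\Sigma|$ into $\log n$. A single union bound over the one sketch gives the overall failure probability $\delta$.

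The main obstacle I anticipate is this verification step, rather than the sketching itself: one must confirm that dropping wildcards at the sketch feed matches the mathematical convention that wildcards are excluded from $f_{-1}(M_i)$, so that the EJS inequality sandwiches $\delta_p(S)$ itself and not some adjacent quantity in which wildcards are double-counted or miscounted. Because the definition of $\delta_p(S)$ and the derived stream both treat wildcards identically as absent entries, the inequality transfers row-wise and the argument closes. A secondary care point is the universe size fed into the KNW sketch, which must be bounded by $\mathrm{poly}(n)$ for the $\log n$ factor in the space bound to be accurate; this is immediate from $p\le n$ and the alphabet being polynomially bounded.
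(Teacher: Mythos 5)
Your proposal matches the paper's approach: the paper likewise obtains this theorem by taking the distinct-elements-based $(2+\eps)$-estimator of \cite{ErgunJS10} (instantiated with the \cite{KaneNW10} sketch) and modifying it only by ignoring wildcard characters, which is consistent with the definition of $f_{-1}(M_i)$ excluding wildcards, so the row-wise sandwich inequality and the space bound carry over unchanged. Your writeup is in fact more explicit than the paper's, which states the modification in two sentences and gives no further proof.
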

\section{Full Algorithms}
\applab{app:algs}
In this section, we provide the full algorithms for finding wildcard-periods $p>\frac{n}{2}$.
We detail the first pass in full in \algoref{alg:big:first:pass}.
\begin{algorithm}[htb]
\caption{(To determine any wildcard-period $p$ if $p>\frac{n}{2}$) First pass}
\alglab{alg:big:first:pass}
\textbf{Input:} A stream $S$ of symbols $s_i\in\Sigma$ with at most $k$ wildcard characters.\\
\textbf{Output:} A succinct representation of all candidate wildcard periods and the position of the wildcard characters.
\begin{algorithmic}[1]
\State{initialize $\pi^{(m)}_j=-1$ for each $0\le j<4k\log n+2$ and $0\le m\le\log n$.}
\State{initialize $\mathcal{T}_m^C=\emptyset$.}
\For{each index $i$, let $r$ be the largest $m$ such that $\frac{n}{2}+\frac{n}{4}+\ldots+\frac{n}{2^r}\le i$.}
\State{using the $k$-mismatch algorithm, check whether 
$$\HAM{S\left[1,\frac{n}{2^r}\right],S\left[i+1,i+\frac{n}{2^r}\right]}\le 2k.$$}
\If{so, let $R=\frac{n}{2}+\frac{n}{4}+\ldots+\frac{n}{2^r-1}$.}
\State{let $j$ be the integer for which $i$ is in the interval 
\[H^{(r)}_j=\left[R+\frac{nj}{2^{r+1}(2k\log n+1)}+1,R+\frac{n(j+1)}{2^{r+1}(2k\log n+1)}\right)\]
}
\If{there exists no candidate $t\in\mathcal{T}_r^C$ in the interval $H^{(r)}_j$}
\State{add $i$ to $\mathcal{T}_{r}^C$.}
\Else
\State{let $t$ be the smallest candidate in $\mathcal{T}_r^C\cap H^{(r)}_j$ and either $\pi^{(r)}_j=-1$ or $\pi^{(r)}_j>0$.} 
\If{$\pi^{(r)}_j=-1$}
\State{set $\pi^{(r)}_j=i-t$.}
\Else
\State{set $\pi^{(r)}_j=\gcd{\pi^{(r)}_j,i-t}$.}
\EndIf
\EndIf
\EndIf
\EndFor
\State{record the positions $\mathcal{W}$ of all wildcard characters}
\end{algorithmic}
\end{algorithm}
\noindent
We present the second pass in \algoref{alg:big:second:pass}.
\begin{algorithm}[htb]
\caption{(To determine any wildcard-period $p$ with $p>\frac{n}{2}$) Second pass}
\alglab{alg:big:second:pass}
\textbf{Input:} A stream $S$ of symbols $s_i\in\Sigma$ with at most $k$ wildcard characters, a succinct representation of all candidate wildcard periods and the position of the wildcard characters.\\
\textbf{Output:} All wildcard-periods $p>\frac{n}{2}$.
\begin{algorithmic}[1]
\For{each $t$ and any $r$ such that $t\in\mathcal{T}_r^C$}
\State{Let $R=\frac{n}{2}+\frac{n}{4}+\ldots+\frac{n}{2^{r-1}}$}
\State{Let $j$ be the integer for which $t$ is in the interval 
\[H^{(r)}_j=\left[R+\frac{nj}{2^{r+1}(2k\log n+1)}+1,R+\frac{n(j+1)}{2^{r+1}(2k\log n+1)}\right)\]}
\If{$\pi^{(r)}_j>0$}
\Comment{$H^{(r)}_j$ has multiple values in $\mathcal{T}_r^C$}
\State{record up to $128k^2\log n+1$ unique fingerprints of length $\pi^{(r)}_j$, starting from $t$.}
%\hskip\algorithmicindent\hskip\algorithmicindent\hskip\algorithmicindent and of length $t$, starting from $t$.}
\Else
\Comment{$H^{(r)}_j$ has one value in $\mathcal{T}_r^C$}
\State{record up to $128k^2\log n+1$ unique fingerprints of length $t$, starting from $t$.}
\EndIf
\State{check if $S[1,n-t]=S[t+1,n]$ and return $t$ if this is true.}
\EndFor
\For{each $t$ which is in interval $H^{(r)}_j=\left[R+\frac{nj}{2^{r+1}(2k\log n+1)}+1,R+\frac{n(j+1)}{2^{r+1}(2k\log n+1)}\right)$, for\\
 some integer $j$}
\If{there exists an index in $\mathcal{T}_r^C\cap H^{(r)}_j$ whose distance from $t$ is a multiple of $\pi^{(r)}_j$}
\State{check if $S[1,n-t]=S[t+1,n]$ and return $t$ if this is true.}
\EndIf
\EndFor
\end{algorithmic}
\end{algorithm}

\end{document}